\newtheorem{theorem}{Theorem}
\newtheorem{observation}{Observation}
\newtheorem{definition}{Definition}
\newtheorem{proposition}{Proposition}
\newtheorem{lemma}{Lemma}
\newtheorem{cclaim}{Claim}
\newtheorem{corollary}{Corollary}
\newtheorem{property}{Property}
\newtheorem{hypothesis}{Hypothesis}
\newcommand{\sv}[1]{#1}
\newcommand{\lv}[1]{}
\newcommand{\sba}{$\mathsf{SBA}$\xspace}
\newcommand{\pba}{$\mathsf{PBA}$\xspace}
\newcommand{\ba}{$\mathsf{BA}$\xspace}
\newcommand{\npc}{NP--complete}
\newcommand{\descendant}{{\sf Desc}}
\newcommand{\fdesc}{{\sf FDesc}}
\newcommand{\child}{{\sf Child}}
\newcommand{\sibling}{{\sf Sibl}}
\newcommand{\bigoh}[0]{{\mathcal{O}}}
\newcommand{\head}{{\sf Lose}}
\newcommand{\troot}{{\sf root}}
\newcommand{\cS}{\mathcal{S}}
\renewcommand{\alpha}{\textsf{ht}}
\renewcommand{\beta}{\textsf{sz}}
\newcommand{\demandTF}{Demand-TF\xspace}
\newcommand{\NPH}{\textsf{\textup{NP-hard}}\xspace}
\newcommand{\TFP}{{\sc TF}\xspace}
\newcommand{\Scal}[0]{\mathcal{S}}
\title{An Exercise in Tournament Design: When Some Matches Must Be Scheduled}
\author {
Sushmita Gupta\textsuperscript{\rm 1},
	Ramanujan Sridharan\textsuperscript{\rm 2},
 	Peter Strulo\textsuperscript{\rm 2}	
}
\begin{document}

\maketitle

\begin{abstract}
Single-elimination (SE) tournaments are a popular format used in competitive environments and decision making. Algorithms for SE tournament manipulation  have been an active topic of research in recent years. In this paper, we initiate the algorithmic study of a novel variant of SE tournament manipulation that aims to model the fact that certain matchups are highly desired in a sporting context, incentivizing an organizer to manipulate the bracket to make such matchups take place. 
We obtain both hardness and tractability results. We show that while the problem of computing a bracket enforcing a given set of matches in an SE tournament is NP-hard, there are natural restrictions that lead to polynomial-time solvability. In particular, we show polynomial-time solvability if there is a linear ordering on the ability of players with only a constant number of exceptions where a player with lower ability beats a player with higher ability.
\end{abstract}

\section{Introduction}\label{sec:intro}

There is a rich history of work on the algorithmics of designing Single Elimination (SE) or knockout tournaments as they are a format of competition employed in varied scenarios such as sports, elections and different forms of decision making~\cite{Tullock80,DBLP:journals/ior/HorenR85,Rosen86,Laslier97,CR11}. 
Based on an initial bracket (a permutation of the players, also called a {\em seeding}), it proceeds in multiple rounds, culminating in a single winner. In each round, all players that have not yet lost a match are paired up to play the next set of matches. Losers exit and winners proceed to the next round, until only one remains, the winner of the tournament. 
In general, the tournament designer is assumed to be given probabilities $p_{ij}$ expressing the likelihood that player $i$ beats player $j$. In this paper, we focus on the deterministic model, i.e., when these probabilities are 0 or 1. This model has already been the subject of numerous papers in the last few years. Besides being independently interesting from a structural and algorithmic perspective as shown by \cite{Williams10,AzizGMMSW14,RamanujanS17,GuptaR0Z18a,Gupta0SZ19,ManurangsiSuksompong,Zehavi23}, the deterministic model naturally captures sequential majority elections along binary trees~\cite{Lang07,VuAS2009} where each ``match'' is a  comparison of votes of two candidates and the candidate with more votes wins and moves on.

A major question in the study of SE tournament design is the {\sc  Tournament Fixing} Problem (TF): Can a designer efficiently find a bracket that maximizes the likelihood (or ensures, in the case of the deterministic model) that a player of their choice wins the tournament?
However, there are other natural objectives around SE tournament design besides favoring a particular player and that is the focus of this paper.  Great rivalries generate great entertainment. Imagine a sports tournament that features marquee matches marked by factors such as historic rivalries, contemporaneous news events, geographic proximity or even personal rivalries between members of the opposing teams. 
These are some of the most widely known and talked about rivalries in the world of sports that greatly enhanced the notoriety and visibility of the sport and thereby achieved great financial success, publicity and relevancy for all the stake holders, be it the organizers, the sponsors, not to mention the participants. 
From the competition design perspective, that considerations such as revenue, viewer engagement, and relevance should be at the forefront is straightforward.

Thus, scheduling especially attractive matches is a rational tournament design imperative, motivating  
our algorithmic study of finding a bracket that aims to ensure that a given set of {\it demand matches} are played in the  SE tournament.
One can view our model of scheduling a set of demanded games in this setting 
to be a special case of revenue maximization with unit revenue given to each demand match and zero to all others. Setting the target revenue to be equal to the number of demand matches implies that achieving the target revenue is the same as scheduling all the demand matches. A more general problem was studied by \cite{Lang07}, in the context of sequential majority voting along binary trees, except they allow arbitrary costs for the edges and the goal is to achieve minimum possible cost in the final SE tournament. They showed this problem to be NP-hard. We also note that our paper is naturally aligned with investigations into the relationship between round-robin and SE tournaments, e.g.,~\cite{DBLP:conf/ijcai/StantonW11}. Specifically, while every demand match obviously occurs in a round-robin tournament,  how efficiently could one ensure the same in an SE tournament? This is our focus.

\paragraph{Our contributions.} We make advances on two fronts-- conceptual and algorithmic. On the conceptual front, we introduce a tournament design objective that is both a special case of revenue maximization and as we demonstrate later, a novel variant of the well-studied Subgraph Isomorphism problem. We call the problem {\sc Demand Tournament Fixing} (Demand-TF), formally defined as follows. 

The input contains (i) a directed graph $T$ (called tournament digraph) whose vertices are the players and for {\em every} pair of players $u$ and $v$, there is a directed edge (i.e., arc) from $u$ to $v$ if and only if $u$ beats $v$ (assume no ties) and (ii) a set 
$\Scal$ of arcs of $T$. The vertices of $T$ are denoted by $V(T)$ and the arcs by $A(T)$. The goal is to find a bracket (if one exists) such that in the SE tournament generated by this bracket, the matches corresponding to the arcs in $\Scal$ (called demand matches) take place. 

Solving this problem requires one to create a bracket that will ensure that a player $u$ (and its ``demand rivals'') {\em all} progress far enough in the tournament so that $u$ is able to play in all the demand matches featuring it.  
Effectively, we are aiming to create within one single bracket, multiple favorable brackets for each of those players that are somehow highly correlated. Treading this fine line raises fascinating algorithmic challenges as we show in this paper. In fact, for the special case of an acyclic tournament digraph (DAG), i.e., when there is a linear ordering of players according to their strengths where each player beats every player appearing after it, the TF problem is trivial as the strongest player wins every SE tournament. On the other hand, even in this special case, {\demandTF} is a challenging problem. However, as we will discuss, our main result implies a polynomial-time algorithm even for this problem on DAGs, as a corollary. 

We highlight the relation between our problem and {\sc Subgraph Isomorphism}~\cite{CyganFKLMPPS15} problem (SI). 
\TFP has a well-established connection with a specific type of spanning tree within the tournament digraph, called a {\it spanning binomial arborescence} or {\sba}~\cite{DBLP:reference/choice/Williams16}. We refer the reader to the section on preliminaries for a formal definition. In fact, there is a solution to the \TFP instance if and only if the tournament digraph has an {\sba} rooted at the favorite player. In terms of SI, the tournament digraph is the ``host'' graph and the ``pattern'' graph being sought is an {\sba} rooted at the favorite player.
In {\demandTF}, the pattern graph is an {\sba} that contains all the {\it demand arcs} (those arcs that correspond to demand matches). {To the best of our knowledge this is a novel ``edge-extension'' variant of SI where the goal is to build the pattern graph using a set of given edges as a starting point. The setting of \cite{ManurangsiSuksompong} can also be interpreted as a constrained version of SI where arcs representing matches between higher ranked players can only occur in parts of the pattern graph that represent later rounds. Their motivation was to prevent the best players from meeting too early.
	
\begin{table}[t]
    \centering
    \begin{tabular}{|l|cc|}
        \hline 
        Algorithms &$n^{O(k)}$-time   & \\
        &  $2^{\bigoh(k\log k)}n^{\bigoh(1)}$-time if $F\subseteq \Scal$ &\\
        & $3^n\cdot n^{\bigoh(1)}$-time  & \\
        \hline
        Hardness  &no $2^{o(n)}$-time algorithm (under ETH) & \\
        &~no $n^{d^{\bigoh(1)}}$-time algorithm (if NP $\not\subseteq$ QP)& \\
        \hline 
    \end{tabular}
    
    \caption{A summary of our results for {\sc Demand-TF}. Here, $n$ is the number of players, $k$ is the size of some minimum feedback arc set $F$ of the input tournament, $\Scal$ is the set of demand matches and $d$ is the number of demand matches.   
    }
\end{table}

We next describe our algorithmic contributions (see Table 1). 
On the one hand, we show that {\demandTF} is \NPH and conditionally rule out any algorithm that runs in time $n^{d^{\bigoh(1)}}$ where $d$ is the number of demand matches. 
This motivates the search for tractable restrictions of the problem and brings us to the central results  of the paper. Here, we make the following contributions: 

\smallskip
\noindent
{\bf Algorithm 1:} {\demandTF} is P-time solvable when the tournament digraph has a linear ordering on the ability of players with a {\em constant} number of exceptions where a player with lower ability beats a player with higher ability. In other words, when the {\em feedback arc set} number of the tournament digraph is constant. This is a natural condition in competitions where there is a clear-cut ranking of the players according to their skills with only a few pairs of players for which the weaker player can beat the stronger player. Motivated by empirical work in ~\cite{RussellB11}, \cite{AzizGMMSW14} initiated the design of algorithms for TF when the instances have constant feedback arc set number and gave the first P-time algorithm. This restriction was then extensively explored in a series of papers~\cite{RamanujanS17,GuptaR0Z18a,Gupta0SZ19} leading to novel fixed-parameter algorithms. 
In parameterized complexity parlance, we give an XP algorithm for {\demandTF} parameterized by the feedback arc set number $k$ (i.e., running time $n^{\bigoh(k)}$). This brings up the natural question of whether the problem is fixed-parameter tractable (FPT) (i.e., solvable in time $f(k)n^{\bigoh(1)}$ for some function $f$). Although we do not settle this question in this paper, we identify an additional structural constraint in our next result that leads to an FPT algorithm.

\smallskip
\noindent{\bf Algorithm 2:} If, in the given instance $(T,\Scal)$ of {\demandTF}, every upset match is also a  demand match, then we get fixed-parameter tractability parameterized by the feedback arc set number. The natural motivation for this scenario is a tournament designer being incentivized (e.g., by betting) to ensure that the upsets take place. 

\smallskip
\noindent{\bf Algorithm 3:} We extend our methodology in the preceding algorithms to handle further constraints (in the same running time). In particular, when the designer wants each demand match to take place in a {\em specific round} of the SE tournament, we can still find such a bracket in time $n^{\bigoh(k)}$. Such constraints allow the designer to ensure that some matches do not occur too early (in the spirit of \cite{ManurangsiSuksompong}) or too late in the tournament. 

Finally, moving to the exact-exponential-time regime we show that assuming the Exponential Time Hypothesis~\cite{ImpagliazzoP01}, one cannot get a subexponential-time algorithm for the problem (i.e., a $2^{o(n)}$ running time where $n$ is the number of players) and complement this lower bound with an algorithm with running time $2^{O(n)}$ -- an asymptotically tight bound. 

\paragraph{Organization of the paper.}
We begin by presenting basic definitions followed by our hardness results. Then, our largest section is dedicated to presenting our main algorithm ({\bf Algorithm 1}). The remainder of the algorithmic contributions ({\bf Algorithm  2} and {\bf Algorithm  3} above) are presented in the following sections. Finally, we conclude with directions for future research.

\section{Preliminaries}\label{sec:prelims}

\paragraph{Binomial arborescences.} An arborescence is a rooted directed tree such that all arcs are directed away from the root. 

\begin{definition}[\citeauthor{Williams10} \citeyear{Williams10}]\label{def:binomialArborescences}
	{\rm The set of {\em binomial arborescences} over a tournament digraph $T$ is recursively defined as follows. {\em (i)} Each $a \in V(T)$ is a binomial arborescence rooted at $a$. {\em (ii)} If, for some $i>0$, $H_a$ and $H_b$ are $2^{i-1}$-node binomial arborescences rooted at $a$ and $b$, respectively, then  adding an arc from $a$ to $b$ gives a $2^{i}$-node binomial arborescence {\rm ({\ba})} rooted at $a$.
	If a binomial arborescence $H$ is such that $V(H)=V(T)$, then $H$ is a {\em spanning binomial arborescence} {\rm ({\sba})} of $T$.}
	
\end{definition}

The relevance of binomial arborescences comes from the following variant of a result of \cite{Williams10}.

\begin{proposition}\label{prop:equivalence}Let $T$ be a tournament digraph and let ${\mathcal S}\subseteq A(T)$. Then, there is a seeding of $V(T)$ such that the resulting SE tournament has every match in $\mathcal{S}$ if and only if $T$ has an {\sba} $H$ such that $A(H) \supseteq \mathcal{S}$.
\end{proposition}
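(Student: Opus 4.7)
The plan is to strengthen the classical equivalence of \cite{Williams10} between seedings and \sba{}s so as to track the set of played matches, rather than only the eventual winner. The argument proceeds by induction on $n = |V(T)|$, which we may assume is a power of two (otherwise no \sba{} of $T$ exists and no SE tournament is defined). In both directions the induction hypothesis should be stated simultaneously for every possible demand subset $\mathcal{S}$, so that the ``restriction to a half'' step below goes through.

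For the ``if'' direction, suppose $T$ has an \sba{} $H$ rooted at some $r$ with $A(H) \supseteq \mathcal{S}$. By Definition 1, $H$ decomposes into two size-$n/2$ \ba{}s $H_a$ and $H_b$ (with $a = r$) joined by the arc $(a,b)$. Apply the induction hypothesis to each half with the induced demand sets $\mathcal{S} \cap A(H_a)$ and $\mathcal{S} \cap A(H_b)$, obtaining seedings of $V(H_a)$ and $V(H_b)$ that make $a$ and $b$ win their respective sub-brackets while realising every arc of $H_a$ and $H_b$ as a played match. Concatenating the two seedings yields a seeding of $V(T)$ in which the halves play out exactly as before, and in the final round $a$ meets $b$ and wins because $(a,b) \in A(T)$. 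Thus every arc of $H$, and in particular every arc of $\mathcal{S}$, is played.

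For the ``only if'' direction, fix a seeding whose SE tournament realises every match in $\mathcal{S}$. The bracket splits into two halves producing sub-winners $a$ and $b$, with $a$ beating $b$ in the final. Apply the induction hypothesis to each half with the demand set restricted to arcs with both endpoints in that half. This yields \ba{}s $H_a$ and $H_b$ of size $n/2$ rooted at $a$ and $b$ whose arc sets contain all matches played inside their halves. Adding the played arc $(a,b)$ to $H_a \cup H_b$ yields, by Definition 1, an \sba{} $H$ of $T$ rooted at $a$. For any $(u,v) \in \mathcal{S}$, either $u$ and $v$ lie in a common half (and $(u,v)$ is captured by $H_a$ or $H_b$) or they first meet in the final (so $(u,v) = (a,b)$); either way $\mathcal{S} \subseteq A(H)$.

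The step needing care, though not deep, is the restriction used in both directions: a demand arc $(u,v)$ played strictly before the final must have both endpoints in the same half of the bracket, so that the inductive invariant on that half can absorb it. This follows from the elementary structural fact that two players meet in round $j$ precisely when $j$ is the smallest index at which their bracket-positions agree modulo $2^{j}$, so any arc corresponding to a match in rounds $1,\dots,i-1$ stays within a single half. The base case $n = 1$ is trivial: $\mathcal{S} = \emptyset$ and the single-vertex \sba{} suffice.
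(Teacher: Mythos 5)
Your proof is correct, and it is the standard inductive seeding--\sba{} correspondence that the paper itself does not spell out (it states the proposition as a variant of Williams' result without proof); your version properly strengthens the induction to track the full set of played matches, which is exactly what is needed. The only blemish is the informal aside that players meet in round $j$ when their positions ``agree modulo $2^j$'' --- the correct condition is that they lie in the same block of size $2^j$ but different blocks of size $2^{j-1}$ --- yet the structural fact you actually use (any match before the final has both players in the same half) is right, so the argument stands.
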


We use the terms vertex and player interchangeably.
A {\em rooted forest} is the disjoint union of a set of arborescences.
Let $H$ be a rooted forest. For a vertex $u\in V(H)$, denote by $\child_H(u)$ the set of children of $u$ in $H$, by $\descendant_H(u)$ the set of descendants of $u$ in $H$ (including $u$).
The {\em strict} descendants of $u$ comprise descendants of $u$ that are neither $u$ nor children of $u$. 
We denote by $\sibling_H(u)$, the set of siblings of $u$ in $H$.
We define the \emph{height} of $v$ in $H$, $\alpha_H(v) := \log|\descendant_H(v)|$.
Note that if $H$ is an {\sba} of $T$, then $\forall v\in V(T),~\exists i\in [\log n]\cup \{0\}$ such that $|\descendant_H(v)|=2^i$ and hence $\alpha_H(v)$ is an integer.
The interpretation in the corresponding SE tournament is that $v$ is the winner of a subtournament played by the players who are descendants of $v$ in $H$ and $\alpha_H(v)$ is the number of matches that $v$ wins.
If $\alpha_H(v)>\alpha_H(u)$, then we say that $v$ is {\em higher} than $u$ and $u$ is {\em lower} than $v$. 
We will also refer to the \emph{height} of a \ba meaning the height of its root.

We will use the following characterization of {\ba}s:
\begin{proposition} \label{prop:altba}
For $n>0$, $H_n$ is a {\ba} of height $n$ rooted at $v_n$ if and only if $H_n = \bigcup_{i=0}^{n-1}{H_i}$ where $H_i$ is a {\ba} of height $i$ rooted at $v_i$ together with an edge from $v_n$ to each $v_i$.
\end{proposition}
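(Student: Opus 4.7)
The plan is to prove both directions by induction on $n$, leveraging the recursive definition given in Definition~\ref{def:binomialArborescences}. The base case $n=1$ is essentially immediate: the only $\mathsf{BA}$ of height $1$ is obtained from two height-$0$ $\mathsf{BA}$s $\{v_1\}$ and $\{v_0\}$ by adding the arc $(v_1, v_0)$, which matches the claimed structure with the single summand $H_0 = \{v_0\}$.

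For the forward direction of the inductive step, I would suppose $H_n$ is a $\mathsf{BA}$ of height $n$ rooted at $v_n$. By Definition~\ref{def:binomialArborescences}, $H_n$ is obtained from two $\mathsf{BA}$s $H'$ and $H''$, each of height $n-1$, rooted at $v_n$ and at some vertex $w$ respectively, by adding the arc $(v_n, w)$. Applying the inductive hypothesis to $H'$, one gets $H' = \bigcup_{i=0}^{n-2} H_i$ where each $H_i$ is a $\mathsf{BA}$ of height $i$ rooted at some $v_i$, together with arcs from $v_n$ to each $v_i$. Setting $v_{n-1}:= w$ and $H_{n-1}:= H''$, one concludes that $H_n = \bigcup_{i=0}^{n-1} H_i$ with the required arc set from $v_n$.

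For the backward direction, I would start from a forest $\bigcup_{i=0}^{n-1} H_i$ of $\mathsf{BA}$s of heights $0,1,\ldots,n-1$ (rooted at $v_0,\ldots,v_{n-1}$) plus arcs from $v_n$ to each $v_i$, and show this is a $\mathsf{BA}$ of height $n$ rooted at $v_n$. Split the structure into two pieces: let $H' := \bigcup_{i=0}^{n-2} H_i$ together with the arcs from $v_n$ to $v_0,\ldots,v_{n-2}$, and let $H'' := H_{n-1}$. The inductive hypothesis (applied in the reverse direction to $H'$, which has the exact form prescribed for height $n-1$) guarantees $H'$ is a $\mathsf{BA}$ of height $n-1$ rooted at $v_n$; $H''$ is a $\mathsf{BA}$ of height $n-1$ rooted at $v_{n-1}$ by assumption. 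Joining them via the arc $(v_n, v_{n-1})$ produces a $\mathsf{BA}$ of height $n$ rooted at $v_n$ by Definition~\ref{def:binomialArborescences}, and this joined object is precisely the original $H_n$.

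I do not anticipate a real obstacle: the statement is essentially the standard ``every binomial tree of order $n$ has children that are roots of binomial trees of orders $0,1,\ldots,n-1$'' fact. The only mild care needed is to make sure, in the forward direction, that the indexing produced by the inductive hypothesis on $H'$ is consistent (the subtree of height $n-1$ comes from $H''$, while the subtrees of heights $0,\ldots,n-2$ come from applying the hypothesis to $H'$), and to note that $|V(H_n)| = 1 + \sum_{i=0}^{n-1} 2^i = 2^n$ is consistent with the height being $n$.
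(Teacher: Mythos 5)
Your proof is correct. The paper states Proposition~\ref{prop:altba} without any proof, treating it as the standard ``binomial tree'' characterization, and your two-directional induction on $n$ via the recursive construction in Definition~\ref{def:binomialArborescences} (peeling off one height-$(n-1)$ subtree rooted at a child of $v_n$ and applying the hypothesis to the remaining height-$(n-1)$ arborescence rooted at $v_n$) is exactly the natural argument that fills this in.
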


In an instance $(T,\mathcal{S})$ of {\sc Demand-TF}, we call the arcs in $\mathcal{S}$ {\em demand arcs} or {\em demand matches} and their endpoints {\em demand vertices}. 
For every demand arc $(p,q)$, we say that $p$ is a {\em demand in-neighbor} or {\em demand parent} of $q$ and $q$ is a {\em demand out-neighbor} or {\em demand child} of $p$. We will use demand parent and demand child in the context of rooted trees and demand in/out-neighbor otherwise. For a vertex $p$, the number of its demand in-neighbors is called its {\em demand in-degree}. The {\em demand out-degree} is defined symmetrically. 
$\head(\cS)$ denotes the vertices with a demand in-neighbor, that is, those vertices that lose some demand match. 

\section{Hardness Results for {\sc Demand-TF}}

We first show that {\sc Demand-TF} is NP-complete and then infer further facts regarding the complexity of this problem using our proof in combination with hardness results on {\TFP} proved by \cite{AzizGMMSW14}.  

\begin{theorem}\label{thm:demandTFPIsNPComplete}
	{\sc Demand-TF} is {\npc}. 
\end{theorem}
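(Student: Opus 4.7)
The plan is to first verify membership in NP, which is straightforward: a seeding of $V(T)$ is a polynomial-size certificate, and the SE tournament it induces can be computed in polynomial time so that one can check whether every arc of $\Scal$ appears as a match.

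For NP-hardness, I will reduce from the \TFP problem, which was shown NP-hard by \cite{AzizGMMSW14}. Given a \TFP instance $(T, v^*)$ with $|V(T)| = n = 2^k$, the plan is to build a \demandTF instance $(T', \Scal)$ with $|V(T')| = 2n = 2^{k+1}$ as follows. Add $n$ new vertices, namely $u$ and $f_1, \ldots, f_{n-1}$. Make $u$ a \emph{clone} of $v^*$ against $V(T) \setminus \{v^*\}$: for each $y \in V(T) \setminus \{v^*\}$, set $u$ to beat $y$ exactly when $v^*$ beats $y$ in $T$. Let $v^*$ beat $u$ and every $f_i$; let every $y \in V(T) \setminus \{v^*\}$ beat every $f_i$; let $u$ beat every $f_i$; and let the $f_i$'s form a transitive tournament with $f_i$ beating $f_j$ whenever $i < j$. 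Set $\Scal = \{(v^*, u), (v^*, f_1), \ldots, (v^*, f_k)\}$, a total of $k+1$ demand arcs.

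For the forward direction, given an {\sba} $H$ of $T$ rooted at $v^*$, I will build an {\sba} $H'$ of $T'$ by making $v^*$ the root with children $u, f_1, \ldots, f_k$ at heights $k, 0, 1, \ldots, k-1$ respectively; the subtree at $u$ mimics $H$ with $v^*$ replaced by $u$ (valid by the clone property), while the subtrees at the $f_i$'s absorb $f_{k+1}, \ldots, f_{n-1}$ using the transitive structure (any transitive tournament is a {\ba} rooted at its top vertex, and a simple counting check $\sum_{i=1}^{k}(2^{i-1}-1) = n-k-1$ shows that the leftover $f_j$'s fit exactly). For the backward direction, the $k+1$ demand arcs force $v^*$ to have at least $k+1$ children in $H'$; since only the root of an {\sba} on $2^{k+1}$ vertices can have $k+1$ children, $v^*$ must be the root with exactly the children $\{u, f_1, \ldots, f_k\}$.

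The main obstacle, and the very purpose of the $f$-gadget, is to pin down the subtree rooted at $u$. The critical observation is that $N^+_{T'}(f_i) = \{f_j : j > i\}$, so by a straightforward induction every descendant of any $f_i$ in $H'$ lies in $\{f_j : j \geq i\}$; in particular, no vertex of $V(T) \setminus \{v^*\}$ can appear in any $f_i$-subtree. Since such vertices must be placed somewhere under $v^*$, all $n-1$ of them lie in $\descendant_{H'}(u)$, forcing $|\descendant_{H'}(u)| \geq n$. Combined with the maximum subtree size $n$ at height $k$, this pins $u$ to height $k$ with $\descendant_{H'}(u) = \{u\} \cup (V(T) \setminus \{v^*\})$ exactly. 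Restricting $H'$ to this subtree and substituting $v^*$ for $u$ then yields an {\sba} of $T$ rooted at $v^*$, where the clone property guarantees that every arc from $u$ to a child corresponds to an out-arc of $v^*$ in $T$ and that arcs internal to $V(T) \setminus \{v^*\}$ are unchanged. The reduction is clearly polynomial-time.
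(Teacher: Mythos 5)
Your proof is correct, and it gets to NP-hardness by a genuinely different gadget than the paper, although the high-level strategy is the same: both reduce from \TFP, pad the instance with $n$ extra vertices, and use only $\log n + 1$ demand arcs, all incident on a single vertex. The paper's construction adds an acyclic dummy block with source $d_1$ that beats $v^*$ but loses to everyone else in $V(T)$, and demands $(d_1,v^*)$ together with the arcs of an {\sba} of the dummy block at $d_1$; the forcing argument is phrased in terms of bracket halves ($d_1$ must win, so it cannot meet $v^*$ before the final, so its half is exactly the dummy block and $v^*$ wins the half on $V(T)$). You instead make $v^*$ itself the forced winner: the $k+1$ demand arcs out of $v^*$ pin it as the root of the {\sba}, the transitive filler vertices $f_i$ can only have other $f_j$'s as descendants (the out-neighborhood/descendant-closure observation), so all of $V(T)\setminus\{v^*\}$ is squeezed under the clone $u$, and the clone property transfers that height-$\log n$ subtree back to an {\sba} of $T$ rooted at $v^*$. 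Your route requires the extra clone idea and the counting check for the filler subtrees, but it reasons directly and cleanly on {\sba} structure (children heights and descendant sets) rather than on brackets, and it preserves the same quantitative features the paper later exploits in Theorem~\ref{thm:consequencesOfNPHardnessReduction} (all demands on one vertex, $d=\log n+1$, only a constant-factor blow-up in $n$), so those corollaries would follow from your reduction just as well. One cosmetic nit: "any transitive tournament is a {\ba} rooted at its top vertex" should be stated as "any transitive tournament on $2^h$ vertices has an {\sba} rooted at its source," which is what your counting argument actually uses.
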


\begin{proof}
To demonstrate the NP-hardness, we will give a polynomial-time reduction from the {\sc Tournament Fixing} problem (TF). Let $(T, v^*)$ be an instance of TF. Let $n = |V(T)|$. We first construct an $n$-vertex acyclic tournament, $D_1$ with source vertex $d_1$. We call $D_1$ a ``dummy'' tournament.
Notice that in any SE tournament played by the vertices in $D_1$ regardless of the seeding, the vertex $d_1$ will be the winner.
Now, define $T'$ to be the tournament obtained by taking the disjoint union of  $T$ and $D_1$ and then doing the following:
\begin{enumerate}
	\item Add the arc $(d_1, v^*)$, ensuring that $v^*$ loses to $d_1$.
	\item For every other vertex $v\in V(T)$, add the arc $(v, d_1)$, ensuring that $d_1$ loses to every vertex in $T$ except $v^*$. 
	\item Add arcs ensuring that every vertex in $T$ beats every vertex in $V(D_1)$ except for $d_1$.
\end{enumerate}
We initially set $\mathcal{S}:= \{(d_1,v^*)\}$, find an arbitrary  
{\sba} $H_1$ on $D_1$ and for each arc of $H_1$ incident on $d_1$, we add this arc to $\mathcal{S}$. This completes the construction of the {\sc Demand-TF} instance $(T',\mathcal{S})$.

Clearly, the reduction can be done in polynomial time, so it remains to argue the correctness. 

Suppose that $(T,v^*)$ is a  yes-instance of TF. Then there exists a permutation $\pi$ over $V(T)$ such that $v^*$ wins the SE tournament where the first-round matches are given by $\pi$ and the pair-wise results by $T$. Suppose that the permutation of $V(D_1)$ that leads to the {\sba} $H_1$ is $\pi_1$. Then, notice that the permutation $\pi'$ obtained by simply taking the union of $\pi$ and $\pi_1$ is a permutation of $V(T')$. We claim that $\pi'$ certifies that $(T',\mathcal{S})$ is a yes-instance of {\sc Demand-TF}.  That is, in the SE tournament where the first-round matches are given by $\pi'$ and the pair-wise results by $T'$, every demand match is played. 
By construction, $\pi'$ results in $v^*$ losing to $d_1$ in the final. Moreover, the second-half of the bracket given by $\pi'$ that comprises only of vertices from $V(D_1)$ is identical to $\pi_1$, guaranteeing that all matches corresponding to arcs in the {\sba} $H_1$ are indeed played. This completes the forward direction. 

Conversely suppose  that there exists a permutation $\pi'$ on $V(T')$ which results in every demand match being played. Notice that the definition of $\mathcal{S}$ implies that $d_1$ wins the tournament and plays exactly $\log n+1$ matches, out of which one is against $v^*$ and the remaining $\log n$ matches must be against vertices of $D_1$. We next argue that $d_1$ cannot play against $v^*$ before the final. If this were not the case, then one half of the bracket given by $\pi'$ contains both $d_1$ and $v^*$. Then, the other bracket must contain at least one vertex from $V(T)$. Since all vertices in $V(T)\setminus \{v^*\}$ are stronger than every vertex in $V(D_1)$, it follows that the opponent of $d_1$ in the final would have to be a vertex of $T$ other than $v^*$, a contradiction to $d_1$ winning the whole tournament. By the same reasoning, the half of the bracket given by $\pi'$ that contains $d_1$ must in fact only comprise vertices of $D_1$. This implies that $v^*$ wins its subtournament that comprises exactly the vertices of $T$, implying that $(T,v^*)$ is a yes-instance of TF. 
\end{proof}

\begin{theorem}\label{thm:consequencesOfNPHardnessReduction}
    From Theorem~\ref{thm:demandTFPIsNPComplete} combined with known hardness of TF \cite{AzizGMMSW14}, we obtain the following.
    \begin{enumerate}
        \item 	{\sc Demand-TF} is {\npc} even if there is a vertex on which every demand arc is incident. 
        \item Unless ${\rm NP}\subseteq {\rm Quasi}$-P-time, there is no algorithm for {\sc Demand-TF} that runs in time $n^{d^{\bigoh(1)}}$, where $d$ is the number of demand arcs. In particular, this rules out a  $2^{d^{\bigoh(1)}}n^{\bigoh(1)}$-time fixed-parameter algorithm. 
        \item Assuming the Exponential Time Hypothesis (ETH), there is no $2^{o(n)}$-time algorithm for {\sc Demand-TF}. 
    \end{enumerate}
\end{theorem}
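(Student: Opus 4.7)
The plan is to derive all three items as direct consequences of the polynomial-time reduction constructed in the proof of Theorem~\ref{thm:demandTFPIsNPComplete}, combined with the known hardness of the {\sc Tournament Fixing} problem (TF) established by Aziz et al. So my strategy is not to build new reductions, but rather to revisit the reduction in Theorem~\ref{thm:demandTFPIsNPComplete} and read off its quantitative properties: (a) the structural shape of $\mathcal{S}$, (b) the number $|V(T')|$ of vertices it produces, and (c) the number $|\mathcal{S}|$ of demand arcs.

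For item~(1), I would observe by direct inspection that the demand set produced by the reduction is $\mathcal{S}=\{(d_1,v^*)\}$ together with those arcs of the {\sba} $H_1$ on $D_1$ that are incident on $d_1$. Every arc in $\mathcal{S}$ shares the endpoint $d_1$, so any instance produced by the reduction already satisfies the stronger structural restriction in~(1). Combined with NP-hardness of TF, this yields NP-completeness for this restricted class of Demand-TF instances.

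For item~(2), the key step is to bound $d=|\mathcal{S}|$ in the produced instance. Since $D_1$ is an acyclic tournament on $n$ vertices with source $d_1$, the root of any {\sba} $H_1$ of $D_1$ is $d_1$, and so (by the recursive structure in Definition~\ref{def:binomialArborescences}) $d_1$ has exactly $\log n$ out-neighbors in $H_1$. Therefore $d = \log n + 1$ while $|V(T')|=2n$. An algorithm running in time $n^{d^{\bigoh(1)}}$ on Demand-TF would then solve the reduced TF instance in time $(2n)^{(\log n)^{\bigoh(1)}} = 2^{\mathrm{polylog}(n)}$, i.e., quasi-polynomial. Since TF is NP-hard, this would place NP in Quasi-P-time. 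The $2^{d^{\bigoh(1)}}n^{\bigoh(1)}$-time consequence is just the specialization where the exponent in $d$ does not interact with $n$.

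For item~(3), I would invoke the fact that the reduction is linear-sized: $|V(T')|=2|V(T)|$. Hence a $2^{o(N)}$-time algorithm for Demand-TF on $N$-vertex inputs immediately yields a $2^{o(n)}$-time algorithm for TF on $n$-vertex inputs, which is ruled out under ETH by the (linear-size) NP-hardness reduction for TF due to Aziz et al. The main obstacle I anticipate in writing this up carefully is verifying the quantitative strength of the cited TF hardness results: specifically, that the reduction used by Aziz et al.\ is sufficiently size-preserving to transfer to a $2^{o(n)}$ ETH lower bound, and that their NP-hardness is robust enough to preclude quasi-polynomial time algorithms. If either granularity is not explicit in the cited paper, I would sharpen the argument by composing their reduction with a standard sparsification step, but I expect no new ideas are needed beyond careful bookkeeping on $|V(T')|$ and $|\mathcal{S}|$.
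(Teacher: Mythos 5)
Your proposal is correct and follows essentially the same route as the paper: item (1) by inspecting that all demand arcs of the reduction are incident on $d_1$, item (2) by noting $d=\log n+1$ so an $n^{d^{\bigoh(1)}}$-time algorithm gives quasi-polynomial-time TF, and item (3) by composing the linear-size (vertex-doubling) reduction with the $\bigoh(n)$-vertex hardness reduction for TF of Aziz et al.\ to transfer the ETH lower bound. Your cautionary remark about the granularity of the cited TF hardness is resolved exactly as the paper does, by using that their reduction maps 3-SAT-2L instances with $n$ variables to TF instances with $\bigoh(n)$ vertices.
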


\begin{proof}
    
    The first statement follows from the construction in Theorem~\ref{thm:demandTFPIsNPComplete}. 	
    Moreover, notice that in the same construction, the number of demand arcs is $\log n+1$. Hence, an algorithm for {\sc Demand-TF} with running time $n^{d^{\bigoh(1)}}$ would imply a quasi-polynomial-time algorithm for TF, which is NP-complete. 
    %
    Finally, the proof of NP-hardness of TF given in \cite{AzizGMMSW14} reduces an instance of 3-SAT-2L (3-SAT where every literal appears at most twice) with $n$ variables to an instance of TF with $\bigoh(n)$ vertices. Since our reduction from TF
    only doubles the number of vertices, a $2^{o(n)}$-time algorithm for {\sc Demand-TF} would imply the same running time for 3-SAT-2L, which violates ETH.
\end{proof}

Recall that the naive algorithm for {\sc Demand-TF} has running time $2^{\bigoh(n\log n)}$ as a result of brute-forcing over all possible brackets. We next improve this to a single-exponential running time with a dynamic programming algorithm, {\em asymptotically} matching the $2^{o(n)}$-time lower bound from Theorem~\ref{thm:consequencesOfNPHardnessReduction}. 
	
\begin{theorem}\label{thm:exactExponentialTimeAlgorithm}
    {\sc Demand-TF} can be solved in time $3^nn^{\bigoh(1)}$. 
\end{theorem}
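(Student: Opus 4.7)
\smallskip
\noindent\textbf{Proof plan.} By Proposition~\ref{prop:equivalence}, it suffices to decide whether $T$ has an \sba containing every arc of $\mathcal{S}$. I would use a subset dynamic program over subsets whose size is a power of $2$, organised so that the total work telescopes to $3^n n^{\bigoh(1)}$ via the standard identity $\sum_{S \subseteq V(T)} 2^{|S|} = 3^n$. Concretely, for every $S \subseteq V(T)$ with $|S|$ a power of $2$ and every $v \in S$, define the boolean entry $f(S,v)$ to be true iff the subtournament $T[S]$ admits a binomial arborescence rooted at $v$ that contains every demand arc with both endpoints in $S$. The overall answer is ``yes'' iff $f(V(T), v)$ holds for some $v \in V(T)$.

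The recurrence is dictated by Definition~\ref{def:binomialArborescences}: for $|S| = 2^i$ with $i \geq 1$, every binomial arborescence on $S$ rooted at $v$ decomposes as the union of two binomial arborescences of size $2^{i-1}$ on a partition $S_1 \sqcup S_2 = S$ with $v \in S_1$, joined by a single arc $(v,u)$, where $u$ is the root of the half on $S_2$. Crucially, $(v,u)$ is the only arc of the arborescence crossing the cut $(S_1, S_2)$. Accordingly, I set $f(S,v)$ to be true iff there exist $S_1, S_2, u$ such that (i) $|S_1| = |S_2| = |S|/2$, $v \in S_1$, $u \in S_2$; (ii) $(v,u) \in A(T)$; (iii) $f(S_1,v)$ and $f(S_2,u)$ both hold; and (iv) every demand arc with exactly one endpoint in $S_1$ equals $(v,u)$. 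Correctness is a routine induction on $|S|$: internal demand arcs are handled by the two recursive calls, while condition~(iv) is both necessary (no other cross-arc lies in the arborescence) and sufficient (since $(v,u)$ itself is always in the arborescence).

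For the running time, a fixed $S$ is processed by enumerating all subsets $S_1 \ni v$ of size $|S|/2$ and all $u \in S_2$, yielding at most $|S| \cdot 2^{|S|}$ candidate triples, each verified in $n^{\bigoh(1)}$ time. Summing over all $S$ gives $n^{\bigoh(1)} \sum_{S \subseteq V(T)} 2^{|S|} = 3^n n^{\bigoh(1)}$, as required. I do not foresee a serious obstacle; the only subtlety is the bookkeeping for cross-cut demand arcs in condition~(iv), and this in fact prunes the search, because any two demand arcs straddling a proposed split immediately rule out that split.
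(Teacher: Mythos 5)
Your proposal is correct and matches the paper's own argument: the paper defines the same table (phrased via the winner of an SE subtournament on $S$, which is equivalent to your \ba formulation through Proposition~\ref{prop:equivalence}), uses the identical equi-partition recurrence with the ``only cross-cut demand arc is $(v,u)$'' condition, and obtains the $3^n n^{\bigoh(1)}$ bound by the same $\sum_{S} 2^{|S|}$ summation. No gaps to report.
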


\begin{proof}
	Let $(T,\cS)$ be the given instance of {\sc Demand-TF}. Let us define a boolean function $\Delta:2^{V(T)}\times V(T)$ as follows. For every $S\subseteq V(T)$ and $x\in V(T)$, $\Delta(S,x)=1$ if and only if the following conditions are satisfied.
	\begin{itemize}
		\item $|S|$ is a power of 2 and $x\in S$. 
		\item There is a permutation $\pi_S$ over $S$ such that the SE tournament played by $S$ according to the seeding $\pi_S$ is won by $x$ and every demand match in $\cS\cap A[S]$ is played in this tournament.  
	\end{itemize} 
	
	Notice that $(T,\cS)$ is a yes-instance if and only if there is some $x\in V(T)$ such that $\Delta(V(T),x)=1$. Hence, it suffices to give an algorithm that computes the function $\Delta$ in the stated running time.  We have the following claim at the crux of our algorithm.  
	
	\begin{cclaim}
		Consider a vertex set $S$ of size $2^{p}$ for some $p>0$,  and let $x\in S$. Then,  $\Delta(S,x)=1$ if and only if  there is an  equi-partition $S_1\uplus S_2=S$ and a vertex $y$ such that $\Delta(S_1,x)=1$,  $\Delta(S_2,y)=1$, $(x,y)\in A(T)$ and except for the arc $(x,y)$ there is no demand arc in $\mathcal{S}$ with one endpoint in $S_1$ and the other endpoint in $S_2$. 
	\end{cclaim}
	
	\begin{proof}
			In the forward direction, suppose that $\Delta(S,x)=1$ and consider an SE tournament played by $S$ that is won by $x$ and in which every demand match in $\cS\cap A[S]$ is played. Let $y$ be the opponent of $x$ in the final of this SE tournament. Let $S_2$ be the descendants of $y$ in the corresponding {\sba} and let $S_1=S\setminus S_2$. Then, notice that except for potentially the arc $(x,y)$, there cannot be a demand arc with one endpoint in $S_1$ and the other in $S_2$. Moreover, $(x,y)\in A(T)$ and every demand arc contained in $\cS\cap (A[S_1]\cup A[S_2])$ must be played, i.e., $\Delta(S_1,x)=\Delta(S_2,y)=1$. This completes the proof of the forward direction. The argument for the converse is symmetrical.  
	\end{proof}
	
	Given the above claim, we compute $\Delta$ using a dynamic programming algorithm as follows. In the base case, when $|S|=1$, this is trivial. Now, suppose that we have computed $\Delta(S,x)$ for every $|S|\leq 2^i$. Consider a vertex set $S'$ of size $2^{i+1}$,  and let $x\in S'$. Using the above claim, it is sufficient to go through every possible equi-partition of $S'$ into sets $S_1'$ and $S_2'$ and check whether there exists a vertex $y$ such that  $\Delta(S_1',x)=1$,  $\Delta(S_2',y)=1$, $(x,y)\in A(T)$ and except for the arc $(x,y)$ there is no demand arc in $\mathcal{S}$ with one endpoint in $S'_1$ and the other endpoint in $S'_2$. This can clearly be done in time $\bigoh^*(2^{|S'|})$. Consequently, the running time of our overall algorithm is bounded by $\Sigma_{i=0}^n{n \choose i}2^in^c$ for some constant $c$, implying a running time of $\bigoh^*(3^n)$ as claimed. 
	This completes the proof of the theorem. 
\end{proof}

\subsection{Going beyond Theorem \ref{thm:exactExponentialTimeAlgorithm}}
We first remark that similar to \cite{KimW15}, the subset convolution technique can be used to speed up the algorithm of Theorem \ref{thm:exactExponentialTimeAlgorithm} to a $2^nn^{\bigoh(1)}$-time algorithm. 
We also point out that the approach used in the  the algorithm of Theorem~\ref{thm:exactExponentialTimeAlgorithm} can be easily extended to a more general version of the problem where we assign to the matches, arbitrary (non-negative) integer weights that are polynomially bounded, and we want to compute a seeding that {\em maximizes} the total weight of the satisfied demands. To achieve this, we would simply have to enhance the function $\Delta$ so that for every $S\subseteq V(T), x\in V(T)$ and polynomially bounded $w\in {\mathbb N}$,  $\Delta(S,x,w)=1$ if and only if  $|S|$ is a power of 2, $x\in S$ and there is a permutation $\pi_S$ over $S$ such that the SE tournament played by $S$ according to the seeding $\pi_S$ is won by $x$ and total weight of the demand matches in $A[S]$ that are played in this SE tournament is $w$. Then, the analogue of the claim in the proof of Theorem~\ref{thm:exactExponentialTimeAlgorithm} would be: $\Delta(S,x,w)=1$ if and only if there is an  equi-partition $S_1\uplus S_2=S$ and integers $w_1,w_2$, and a vertex $y$ such that $\Delta(S_1,x,w_1)=1$,  $\Delta(S_2,y,w_2)=1$, $(x,y)\in A(T)$ and the sum of $w_1,w_2$ and the weight assigned to the arc $(x,y)$ adds up to $w$. 

\section{{\sc Demand-TF} on Graphs of Bounded Feedback Arc Set Number}
We now turn our attention to tournament digraphs with a constant-size (denoted by $k$) feedback arc set.  In this section we will assume that an instance of {\demandTF} is a triple $(T, \mathcal{S},F)$, where $F$ is a minimum feedback arc set of $T$. The assumption that $F$ is given, is without loss of generality since a minimum feedback arc set of size at most $k$ can be computed in time $3^kn^{\bigoh(1)}$-time~\cite{CyganFKLMPPS15}. We refer to the endpoints of $F$ as {\em feedback vertices}. Additionally we will assume that $\sigma=v_1,v_2, \dots, v_n$ is a linear ordering of the vertices of $T$ such that the arcs $(v_i,v_j)$ with $i>j$ are precisely the ``upset'' matches, i.e., the arcs of $F$. We say that $v_i$ is \emph{stronger} than $v_j$ for all $j > i$. 
We say that a subgraph of the tournament digraph is {\em valid} if every demand arc between two vertices in the subgraph is also present in the subgraph.  Thus, Proposition~\ref{prop:equivalence} implies that a solution to {\sc Demand-TF} corresponds to a valid {\sba} of the given tournament digraph. 

Note that any player can lose at most one match in an SE tournament. 
Hence, if $(T,\mathcal{S},F)$ is an instance of {\sc Demand-TF} in which some vertex has demand in-degree greater than 1, then it is a no-instance. So, we may assume without loss of generality that in any non-trivial instance of {\sc Demand-TF}, every vertex has at most one demand in-neighbor.  In the rest of this section we will also assume that all but at most one of the feedback vertices has a demand in-neighbour: this is true if, in the final \sba, {\em every} feedback vertex except the root has a demand parent. We will ensure this in our algorithm by guessing the parent of every feedback vertex (i.e., the player that beats it) and adding the resulting arc to the set of demand arcs (the overhead is at most $n^{2k}$). 

We also guess the heights of each feedback vertex (overhead $(\log n)^{2k}$). Using this as a starting point, we obtain an estimate for the heights of every vertex via the following definition, where the reader may think of the function $g$ as our guesses for the heights of the feedback vertices. 

\begin{definition}[Function $\alpha^*$ and compactness property]\label{def:alphastarwithg}
	{\rm Fix a function $g : V(F) \rightarrow [\log n]$.
	For each $v \in V(T)$ let $\alpha^*_g(v) = g(v)$ if $v \in V(F)$.
	Otherwise let $\alpha^*_g(v)$ be the minimum non-negative integer satisfying:
	\begin{enumerate}
		\item  For each demand arc $(v, w)$, $\alpha^*_g(v) > \alpha^*_g(w)$. 
		\item  For each $u,w$ such that $(u, v), (u, w) \in \mathcal{S}$, $\alpha^*_g(v) \neq \alpha^*_g(w)$ if either (i) $w$ is weaker than $v$ or (ii) $w \in V(F)$. 
	\end{enumerate}
	We say that a binomial arborsescence $H$ is \emph{compact with respect to $g$} if $\alpha_H(v) = \alpha^*_g(v)$, for every $v \in \head(\mathcal{S}) \cap V(H)$.\\
	Additionally, we say that a {\ba} $H$ is \emph{weakly compact} if it is compact with respect to the function $\alpha_H$ restricted to $V(F)$.}
\end{definition}

Intuitively, $\alpha^*_g(v)$ can be described as follows. Fix a hypothetical solution, that is a valid {\sba} $H$, and suppose that in $H$, we know the heights of each demand child and each weaker demand sibling of $v$. Moreover, suppose that we know the heights of the vertices in $V(F)$, which is expressed by the function $g$. Based on this information, since $H$ contains every demand arc, one can narrow down the set of all possible heights that $v$ can have in $H$, e.g., by using the fact that $v$ is higher than every child, $v$ cannot have the same height as a sibling, and so on. The value of $\alpha^*_g(v)$ is the smallest candidate value of the height of $v$ we are left with. 
In other words, $\alpha_g^*(v)$ gives a lower bound on the height of $v$ in {\em any} solution. This is formally stated below. 

\begin{observation} \label{obs:atleastalphastar}
	If a valid \ba $H$ is compact with respect to $g$, then, for all $v \in V(H)$, $\alpha_H(v) \geq \alpha_g^*(v)$.
\end{observation}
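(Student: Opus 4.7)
The plan is to proceed by induction on $\alpha_H(v)$, splitting into cases based on whether $v$ lies in $\head(\mathcal{S})$, in $V(F)$, or in neither. The two easy cases will be handled directly by the hypotheses: if $v \in \head(\mathcal{S}) \cap V(H)$ then compactness of $H$ with respect to $g$ gives $\alpha_H(v) = \alpha^*_g(v)$ by definition, and if $v \in V(F)$ then $\alpha^*_g(v) = g(v)$, which is the height the algorithm guessed for $v$ (and for the at-most-one feedback vertex outside $\head(\mathcal{S})$, namely the root of $H$, we use $\alpha_H(\mathrm{root}) = \log n \geq g(\mathrm{root})$).

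The interesting case is $v \in V(H) \setminus (\head(\mathcal{S}) \cup V(F))$. Here $\alpha^*_g(v)$ is, by Definition~\ref{def:alphastarwithg}, the minimum non-negative integer satisfying Conditions 1 and 2. So it suffices to show that $\alpha_H(v)$ itself satisfies these two conditions, and minimality then forces $\alpha_H(v) \geq \alpha^*_g(v)$. For Condition 1, let $(v,w)$ be any demand arc. Since $H$ is valid and contains $v$ (and, in the intended application, all demand descendants of $v$), we have $(v,w) \in A(H)$, so $w$ is a child of $v$ in $H$; hence $\alpha_H(v) > \alpha_H(w)$, and the inductive hypothesis gives $\alpha_H(w) \geq \alpha^*_g(w)$, so $\alpha_H(v) > \alpha^*_g(w)$. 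Condition 2 is vacuous in this case: since $v \notin \head(\mathcal{S})$, there is no $u$ with $(u,v) \in \mathcal{S}$, so no pair $(u,v),(u,w) \in \mathcal{S}$ exists to trigger the condition.

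The main obstacle I anticipate is the interaction between the three cases and the use of compactness: specifically, verifying Condition 2 for demand siblings would require a non-trivial argument, but, as noted, it turns out to be unneeded precisely because we are in the case $v \notin \head(\mathcal{S})$. A related subtlety is handling the feedback vertices, which is why the preceding discussion pre-processes the instance so that all feedback vertices except possibly the root lie in $\head(\mathcal{S})$ and guesses their heights; this collapses the $V(F)$ case to either compactness or to the trivial bound $\alpha_H(\mathrm{root}) = \log n$. Once these cases are in place, the induction runs smoothly from the leaves (where $\alpha_H(v)=0$ and the conditions are vacuous because leaves of the valid \ba\ have no demand children available in $V(H)$) up to the root.
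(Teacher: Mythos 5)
Your argument is correct and essentially the paper's own: compactness settles $v \in \head(\mathcal{S})$, and otherwise one verifies that $\alpha_H(v)$ satisfies Conditions 1--2 of Definition~\ref{def:alphastarwithg} (Condition 2 being vacuous since $v \notin \head(\mathcal{S})$), so minimality of $\alpha^*_g(v)$ gives the bound. The induction is superfluous---each demand child $w$ of $v$ lies in $\head(\mathcal{S})$, so compactness already yields $\alpha_H(w) = \alpha^*_g(w)$ directly---but this does not affect correctness, and your explicit treatment of the $V(F)$ case is if anything more careful than the paper's.
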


\begin{proof}
    Clearly the claim holds for $v \in \head(\mathcal{S})$ by definition. If $v$ is not a demand vertex, that is there are no demand edges incident on $v$, then $\alpha_g^*(v) = 0$ so the claim holds vacuously. Otherwise we have $\alpha_H(v) = 1 + \max_{w \in \child_H(v)}{\alpha_H(w)}$ and $\alpha_H(w) = \alpha_g^*(w)$ whenever $(v, w) \in \mathcal{S}$ since then $w \in \head(\mathcal{S})$. Additionally condition 2 never applies so $\alpha_H(v) \geq \alpha_g^*(v)$.
\end{proof}

Note that for any vertex $v$, $\alpha^*_g(v)$ is completely determined by only 
$\alpha^*_g(w)$ where $w$ is a child of $v$ or a sibling that is weaker or a feedback vertex. If $w$ is a feedback vertex then $\alpha^*_g(w)$ is determined by $g$, otherwise in both of the other cases, $w$ is weaker than $v$. So $\alpha^*_g$ can be easily calculated in polynomial time by simply applying the definition to vertices in strength order beginning with the weakest. From now on we will assume that we know $\alpha^*_g$.

\textbf{The central insight behind our algorithm} that leads to the notion of compactness is that in yes-instances, there is always a solution where the height of {\em every vertex} that loses a demand match is precisely this smallest candidate value (this is formalized in Lemma \ref{lem:alwaysweaklycompact}). This motivated our definition of weak compactness in Definition~\ref{def:alphastarwithg}. Given this fact, our algorithmic strategy is to ``pack'' the rest of the vertices into the solution {\sba} using an intricate subroutine that is guided by this insight. 
Roughly speaking, our algorithm will use a greedy approach to complete the packing, where at any step, a set of partially constructed subgraphs are available and the goal is to make a new partial solution that contains the latest vertex that is processed. However, the challenge our approach has to face is that in the intermediate steps of our algorithm we would be  dealing with partially constructed subgraphs (i.e., forests) where each component is not necessarily an \sba, yet we cannot simply break them apart since they encode important height information that we wish to enforce in the complete solution. Thus, the step-by-step challenge, solved by subroutine \textsc{Pack} which we describe later, is to carefully ``glue'' some of these structures together to form a supergraph in each step such that in the final step we have a \ba. The trickiest aspect is to do this in such a way that if at any point we cannot find appropriate pieces to glue together, we are able to correctly reject.

\paragraph{Guaranteed compactness.} 
We next formalize our central insight and prove that every yes-instance has a valid {\em weakly compact} {\sba}. Towards this, we  argue that we can modify any valid {\sba} to achieve this property using the following ``exchange'' lemma.

\begin{lemma} \label{lem:almostAcyclicExchange}
	Suppose $H$ is a valid \sba, $(u, v) \in \mathcal{S}$, $w \in \sibling_H(v)$ such that  $v, w \notin V(F)$, $w$ is lower than $v$ and let $B$ be the set of children of $v$ that are at least as high as $w$. 
	Moreover, suppose that $v$ has no demand out-neighbors in  $B$ and either (i)  $w$ is stronger than $v$ or (ii) $(u, w) \notin \mathcal{S}$. 
	Then, $H$ can be transformed to a valid {\sba}, $H'$ where the following hold:
	\begin{enumerate} \item  $\alpha_{H'}(v) = \alpha_{H}(w)$. That is, after the transformation, $v$ now has the ``old'' height of $w$.  
		\item  For every $x\notin B\cup \{v,w\}$, $\alpha_{H}(x)=\alpha_{H'}(x)$. That is, except for a few vertices adjacent to $v$ in $H$ the heights of all other vertices remain the same after the transformation.  
	\end{enumerate}
\end{lemma}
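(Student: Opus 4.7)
My plan is to construct $H'$ explicitly. Write $p = \alpha_H(v)$ and $q = \alpha_H(w)$, so $p > q$, and let $c_0, \ldots, c_{p-1}$ be the children of $v$ in $H$ with $\alpha_H(c_i) = i$ (so $B = \{c_q, \ldots, c_{p-1}\}$) and $d_0, \ldots, d_{q-1}$ the children of $w$; since $(u,v) \in \cS \subseteq A(H)$ and $w \in \sibling_H(v)$, $u$ is the parent of both $v$ and $w$ in $H$. The idea is to lower $v$ to height $q$, keeping only $c_0, \ldots, c_{q-1}$ as its children (forming a \ba of height $q$ by Proposition~\ref{prop:altba}), and to reassemble the vertices of $\{w\} \cup B$ together with their $H$-subtrees into a new \ba of height $p$ that takes $v$'s former slot as a child of $u$. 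A preliminary observation simplifies every arc-existence worry: $B \cap V(F) = \emptyset$. Indeed, if some $c_i$ lay in $B \cap V(F)$, then the preprocessing that gives every feedback vertex a demand in-neighbor would force $(v, c_i) \in \cS$ (as $v$ is $c_i$'s unique parent in $H$), contradicting the hypothesis on $B$. Combined with $v, w \notin V(F)$, this means every arc of $T$ with both endpoints in $\{v, w\} \cup B$ is dictated purely by the strength order $\sigma$.

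For case (i), when $w$ is stronger than $v$, I would take the new height-$p$ \ba to be rooted at $w$ with children $d_0, \ldots, d_{q-1}, c_q, \ldots, c_{p-1}$ (each $c_i$ keeping its $H$-subtree), a \ba by Proposition~\ref{prop:altba}. The new arcs $(w, c_i)$ lie in $T$ because each $c_i$ is weaker than $v$, hence weaker than $w$, and $w \notin V(F)$ ensures that $w$ beats every weaker vertex.

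For case (ii), when $(u, w) \notin \cS$, the choice $r = w$ may fail if $w$ is weaker than some $c_i \in B$, so I would build the height-$p$ \ba iteratively. Merge the \ba's rooted at $w$ and $c_q$ (both of height $q$) into one of height $q+1$ by making the stronger of $\{w, c_q\}$ the parent of the other; the required arc is in $T$ because both endpoints are outside $V(F)$. Then merge the result with $c_{q+1}$, then $c_{q+2}$, and so on up to $c_{p-1}$; after $p-q$ merges I obtain a \ba of height $p$ rooted at some $r \in \{w\} \cup B$. I would attach $r$ as $u$'s new child at height $p$; the arc $(u, r)$ is in $T$ because $u$ is stronger than $v$ and hence stronger than $r$, and $r \notin V(F)$ prevents $(r, u)$ from being a feedback arc.

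The three conclusions then follow: $\alpha_{H'}(v) = q$ is immediate; heights of $x \notin B \cup \{v, w\}$ are preserved because the surgery only rearranges the top-level roots $\{v, w\} \cup B$, while every sub-\ba (the $H$-subtrees at $c_0, \ldots, c_{q-1}$, at the $d_j$'s, at the strict descendants of each $c_i \in B$, and everything outside $\descendant_H(u)$) is transplanted intact; and every demand arc remains in $A(H')$ because the only arcs removed from $H$ are $(v, c_i)$ for $c_i \in B$ (not in $\cS$ by hypothesis) and, in case (ii), $(u, w)$ (not in $\cS$ by hypothesis). I expect the main obstacle to be case (ii), where the iterative reassembly and the final attachment to $u$ must use arcs actually present in $T$; the observation $B \cap V(F) = \emptyset$ is precisely what makes every such arc-existence question collapse to the strength order $\sigma$.
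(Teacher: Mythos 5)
Your construction is correct and follows essentially the same route as the paper's proof: the same preliminary observation that $B\cap V(F)=\emptyset$ (via the every-feedback-vertex-has-a-demand-parent assumption plus validity of $H$), the same direct height swap rooting $B$ under $w$ when $w$ is stronger than $v$, and the same iterative pairwise merging of the \ba{}s rooted at $w$ and the members of $B$ into a height-$\alpha_H(v)$ \ba reattached below $u$ when $(u,w)\notin\mathcal{S}$. The only cosmetic difference is that you split cases on (i)/(ii) directly rather than on whether $(u,w)\in\mathcal{S}$, which changes nothing of substance.
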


\begin{proof}
    First of all, we observe that 
	no vertex of $B$ can be in $V(F)$. Indeed, since the vertices in $B$ are children of $v$ and $v$ has no demand out-neighbors in $B$ by the premise of the lemma, it follows that the vertices in $B$ have no demand in-neighbors. On the other hand, recall that we have assumed that every non-root feedback vertex has a demand in-neighbor. Hence, $B$ is disjoint from $V(F)$.

	The vertices of $B$ are weaker than $v$ which in turn must be weaker than $u$ since $v\notin V(F)$. Similarly, $w$ is weaker than $u$ because of the premise that $w\notin V(F)$. Moreover, notice that no vertex in $B$ can beat $u$. Otherwise, we would have a 3-cycle containing $u$,$v$ and a vertex of $B$, implying that two of these vertices must be in $V(F)$, a contradiction to the preceding arguments. More generally, $u$ beats $B\cup \{v,w\}$, it must be the case that if the subgraph induced by $B \cup \{u, v, w\}$ contains a cycle, then $V(F)$ contains at least two vertices of $B \cup \{u, v, w\}$. But this is a contradiction to the premise that $\{v,w\}\notin V(F)$ and our earlier conclusion that $B$ is disjoint from $V(F)$. Hence, the subgraph induced by $B\cup \{u,v,w\}$ is acyclic.
	
	Let $\ell=\alpha_H(v)-\alpha_H(w)-1$ and let $r_0,\dots, r_\ell$ be the vertices in $B$, where, for each $i\in \{0,
	\dots, \ell\}$, $\alpha_H(r_i)=\alpha_H(w)+i$. Moreover, for each $i\in \{0,\dots, \ell\}$, let $T_i$ denote the {\ba} of height $\alpha_H(w)+i$, rooted at $r_i$. 
	
	Recall that we have assumed that $v$ has no demand out-neighbors in $B$. So,  we can remove the arc $(v, x)$ for each $x \in B$ without ``losing'' any demand matches. Let $H'$ be the rooted forest obtained from $H$ by removing these arcs. We will now modify $H'$ to obtain an {\sba} where the height of $v$ is the same as the height of $w$ in $H$. 
	Notice that currently, $\alpha_{H'}(v) = \alpha_H(w)$ and the vertices of $B$ are now the roots of {\ba}s of heights $\{\alpha_H(w)+i\mid i\in \{0,\dots, \ell\}\}$.  
	
	We now consider the following two cases. 
	\begin{description}[style=unboxed,leftmargin=0cm]
        \item [Case 1:] $(u, w) \in \mathcal{S}$. Then, we are in the case where $w$ is stronger than $v$ and so, $w$ beats $r_0,\dots, r_{\ell}$. Hence, we can add to $H'$, the arc $(w, x)$ for each $x \in B$, thereby rooting these {\ba} below $w$ and converting $H'$ to an {\sba}. This ensures that $\alpha_{H'}(v)=\alpha_H(w)$. Effectively, we have swapped the heights of $v$ and $w$.
        \item [Case 2:] $(u, w) \notin \mathcal{S}$. In this case, remove $(u,w)$ from $H'$ and call the resulting forest $H''$. Notice that $\alpha_{H''}(v)=\alpha_H(w)$. Therefore, it remains to construct a {\ba} $Q$ of height $\alpha_H(v)$ using the trees $T_0,\dots, T_\ell$ along with the subtree of $H$ rooted at $w$ (call it $T_w$), such that $Q$ can be rooted below $u$.  Towards this, notice that $u$ beats every vertex in $B\cup \{w\}$. Hence, our task is simply to ``pack'' the trees $T_w,T_0,\dots, T_\ell$ into a {\ba} of height $\alpha_H(v)$ and as long as the root of this {\ba} is contained in $B\cup \{w\}$, we can just root this {\ba} below $u$. 
		
		\begin{figure}[t]
			\begin{center}
				\includegraphics[scale=0.9]{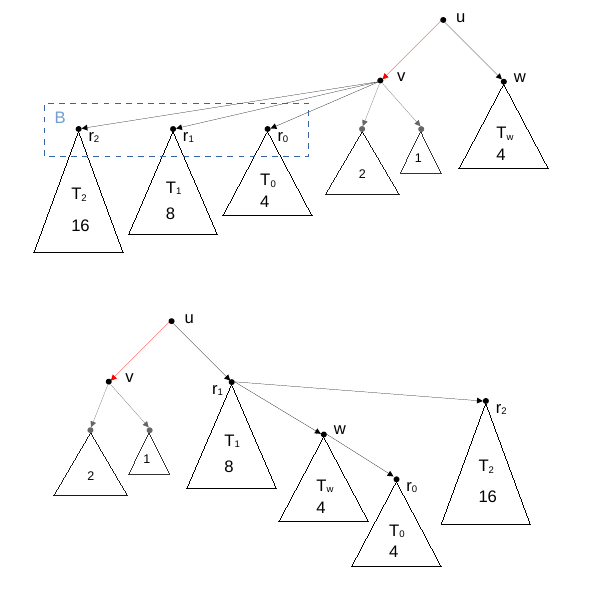}
				\caption{An illustration of the exchange operation in Case 2 in the proof of Lemma~\ref{lem:almostAcyclicExchange}. The first image is before the transformation and the second image is after. Notice that $(u,w)$ is not a demand arc in this case. Moreover, in this figure, we are assuming that $r_1$ beats $w$ and $r_2$ while $w$ beats $r_0$. The exchange argument in Case 1 is straightforward. We simply make $r_0,r_1,r_2$ children of $w$. }
				\label{fig:exchangeLemma1}
			\end{center}
		\end{figure}
		
		Set $p_0=w$ and for each $i\in [\ell]$, define the vertex $p_i$ as the stronger vertex in the pair $\{p_{i-1},r_{i-1}\}$ and $s_i$ as the weaker vertex. 
		Similarly, let $J_0$ denote the subtree of $H$ rooted at $w$ and for every $i\in [\ell+1]$, define $J_i$ to be the {\ba} obtained by taking $J_{i-1}$ and $T_{i-1}$ and making $p_i$ the root, i.e., adding the arc $(p_i,s_i)$. Notice that $J_1,\dots,J_{\ell+1}$ exist since, for each $i\in \{0,\dots, \ell\}$, the trees $J_i$ and $T_i$ are {\ba} of height $\alpha_{H}(w)+i$. Moreover, $J_{\ell+1}$ is a {\ba} of height $\alpha_{H}(w)+\ell+1=\alpha_H(v)$ and is rooted at $p_{\ell+1}\in B\cup \{w\}$, which is weaker than $u$. Hence, we can simply delete from $H''$ the trees $J_0,T_0,\dots, T_\ell$, add the {\ba} $J_{\ell+1}$ and make its root $p_{\ell+1}$, a child of $u$. 
		See Figure~\ref{fig:exchangeLemma1} for an illustration of this exchange operation. 
	\end{description}
	
	In both cases, it is straightforward to check that the second statement of the lemma holds.
\end{proof}

\begin{lemma} \label{lem:alwaysweaklycompact}
	If there exists a valid {\sba} $H$, then there exists a valid, weakly compact {\sba}.
\end{lemma}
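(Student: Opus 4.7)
The plan is a minimality-and-exchange argument built on Lemma~\ref{lem:almostAcyclicExchange}. Among all valid {\sba}s of $T$, choose $H$ so as to lexicographically minimize the sequence $h(H) = (\alpha_H(u_1), \alpha_H(u_2), \ldots, \alpha_H(u_m))$, where $u_1, u_2, \ldots, u_m$ enumerate $\head(\mathcal{S})$ from weakest to strongest in $\sigma$. Such an $H$ exists since the set of valid {\sba}s is finite and nonempty by hypothesis. Set $g := \alpha_H|_{V(F)}$. I claim this $H$ must be weakly compact. If not, since feedback vertices trivially satisfy $\alpha_H(v) = g(v) = \alpha^*_g(v)$, there is a ``violator'' $v \in \head(\mathcal{S}) \setminus V(F)$ with $\alpha_H(v) > \alpha^*_g(v)$; pick a weakest such $v$ in $\sigma$, and let $u$ be its demand parent, which is also its $H$-parent since $(u,v) \in \mathcal{S} \subseteq A(H)$.

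The central step is to locate a swap partner $w$ and check the hypotheses of Lemma~\ref{lem:almostAcyclicExchange}. Because the children of $u$ in the {\sba} $H$ have pairwise distinct heights exhausting $\{0,1,\ldots,\alpha_H(u)-1\}$, and $\alpha^*_g(v) < \alpha_H(v) \le \alpha_H(u)-1$, there is a unique sibling $w$ of $v$ in $H$ with $\alpha_H(w)=\alpha^*_g(v)$. I verify the lemma's hypotheses in order: (i) $v \notin V(F)$ by choice, and $w \notin V(F)$, because if $w \in V(F)$ then the standing assumption forces $w$ to have a demand in-neighbor, which must coincide with its $H$-parent $u$, hence $(u,w) \in \mathcal{S}$, but then condition~(2) of Definition~\ref{def:alphastarwithg} gives $\alpha^*_g(v) \ne \alpha^*_g(w) = g(w) = \alpha_H(w)$, contradicting our choice of $w$; (ii) $w$ is lower than $v$ by construction; (iii) $v$ has no demand out-neighbor in $B$, since for any demand child $y$ of $v$ we have $\alpha^*_g(y) = \alpha_H(y)$ (weaker $y$ by the weakest-violator choice; stronger $y$ forces $(v,y)$ to be a feedback arc, placing $y \in V(F)$ with $\alpha^*_g(y) = g(y) = \alpha_H(y)$), so condition~(1) of Definition~\ref{def:alphastarwithg} gives $\alpha_H(y) < \alpha^*_g(v) = \alpha_H(w)$; and (iv) if $w$ is weaker than $v$ then $(u,w)\notin \mathcal{S}$, again by condition~(2).

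Applying Lemma~\ref{lem:almostAcyclicExchange} yields a valid {\sba} $H'$ in which $\alpha_{H'}(v) = \alpha^*_g(v)$ while every vertex outside $B \cup \{v,w\}$ retains its height. The lemma's proof also shows $B \cap V(F) = \emptyset$, and $v,w \notin V(F)$, so $\alpha_{H'}|_{V(F)} = g$ and the relevant $\alpha^*$-function is unchanged. I finish by showing $h(H') <_{\mathrm{lex}} h(H)$, contradicting minimality. The coordinate at $v$ strictly drops. Every $b \in B$ satisfies $b \notin \head(\mathcal{S})$, because the only demand parent $b$ could have is its $H$-parent $v$, excluded by~(iii). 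In Case~1 of Lemma~\ref{lem:almostAcyclicExchange}, $w$ is strictly stronger than $v$, so its (possibly increased) coordinate sits later in the weakest-first list; in Case~2, $(u,w)\notin \mathcal{S}$ forces $w \notin \head(\mathcal{S})$. Hence every coordinate strictly before $v$'s is preserved while the $v$-coordinate strictly decreases, giving the lex decrease. The main obstacle is the layered case analysis verifying the hypotheses of Lemma~\ref{lem:almostAcyclicExchange} — especially ruling out $w \in V(F)$ and showing that every demand out-neighbor of $v$ lies strictly below the height of $w$ — and setting up the potential $h$ with the weakest-first ordering so that a single exchange forces a strict lex decrease.
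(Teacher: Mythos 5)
Your proof is correct and takes essentially the same route as the paper's: both single out the weakest vertex $v\in\head(\mathcal{S})$ violating compactness, take the sibling $w$ of height $\alpha^*_g(v)$, verify the premises of Lemma~\ref{lem:almostAcyclicExchange} by the identical case analysis (ruling out $w\in V(F)$ via the demand-parent assumption and Condition~2, ruling out demand children of $v$ in $B$ via Condition~1, and handling the stronger-$w$/non-demand-$(u,w)$ dichotomy), and then apply the exchange; the only difference is the extremal wrapper, where you lexicographically minimize the weakest-first height vector on $\head(\mathcal{S})$ while the paper picks an {\sba} whose certificate of non-compactness is strongest, two interchangeable devices around the same exchange step (yours has the mild advantage of not needing the observation that $\alpha_H$ and $\alpha_{H'}$ agree on $V(F)$). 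The one compressed point is your assertion that the weakest violator satisfies $\alpha_H(v)>\alpha^*_g(v)$ (equivalently, that no weaker demand-losing vertex violates in either direction), which is exactly the short candidacy argument the paper states in one line and is easily supplied.
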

\sv{
	\begin{proof}[Proof Sketch]
		In any non-weakly-compact \sba there is a weakest vertex that contradicts the definition of weak compactness. We choose a valid \sba that has the strongest such ``certificate of non-compactness'' and aim to find another \sba with a stronger one which would give the required contradiction. By properties of {\ba}s this certificate vertex has a sibling that is the correct height so we apply Lemma \ref{lem:almostAcyclicExchange} to swap these heights. This gives us the required \sba.
	\end{proof}
}
\begin{proof}
    Note that if a valid {\sba} $H$ is not weakly compact, then  there is a weakest vertex $v\in \head(\Scal)$ such that  $\alpha_H(v) \neq \alpha_g^*(v)$, where $g$ is the restriction of $\alpha_H$ to $V(F)$. Clearly, $v$ cannot be a feedback vertex since by definition of $g$, $\alpha_H(v)=\alpha_g^*(v)$. We call $v$ the {\em certificate of non-compactness} of $H$. In the rest of the proof of this lemma, we assume that
	$H$ is chosen in such a way that its certificate of non-compactness $v$ is the strongest possible among those of all valid, non-weakly-compact {\sba}. 
	Then, $\alpha_H(x) = \alpha_g^*(x)$ for every $x$ that is weaker than $v$ and loses a demand match.
 
	We aim to find an $H'$ with a strictly stronger certificate of non-weak-compactness, that is $\alpha_{H'}(x) = \alpha^*_{g'}(x)$ for all $x \in \head(\mathcal{S})$ that is either $v$ or weaker than $v$, where $g'$ is the restriction of $\alpha_{H'}$ to $V(F)$.  This would give us the required contradiction.
 
	First of all, $v$ must be higher than all of its children in $H$ and $v$ cannot have the same height as any of its siblings in $H$. So, $\alpha_H(v) \neq  \alpha_g^*(v) $ implies that $\alpha_H(v) > \alpha_g^*(v)$. Moreover,  since $H$ is a {\ba}, there must exist $w \in \sibling_H(v)$ such that $\alpha_H(w) = \alpha_g^*(v)$.  Let $u$ be the parent of $v$ and $w$ in $H$. Moreover, $w\notin V(F)$ since otherwise, $\alpha_g^*(w)$ and $\alpha_g^*(v)$ would coincide, which is not possible.
	
	Our next goal is to prove the following claim, which will enable us to use our exchange arguments from Lemma~\ref{lem:almostAcyclicExchange}.
	
	\begin{cclaim}
		$H,\mathcal{S},u,v,w$ satisfy the premise of Lemma~\ref{lem:almostAcyclicExchange}. 
	\end{cclaim}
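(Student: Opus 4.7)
My plan is to go through the clauses in the premise of Lemma~\ref{lem:almostAcyclicExchange} one at a time, using the setup from the proof of Lemma~\ref{lem:alwaysweaklycompact}. Several clauses are immediate: $H$ is a valid \sba by choice, $w$ is a sibling of $v$ in $H$ with $w$ lower than $v$ by construction, and $v,w\notin V(F)$ was already established ($v$ is the weakest certificate of non-compactness, whose height is not $\alpha_g^*(v)$, but feedback vertices have their heights pinned by $g$; $w\notin V(F)$ because otherwise condition 2 in the definition of $\alpha_g^*$ would force $\alpha_g^*(w)\ne \alpha_g^*(v)$, contradicting $\alpha_H(w)=\alpha_g^*(v)$ once we remember that $\alpha_H(w)=\alpha_g^*(w)=g(w)$ when $w\in V(F)$). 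The fact that $(u,v)\in\mathcal{S}$ follows because $v\in\head(\mathcal{S})$ has at most one demand in-neighbor, and any such demand in-neighbor must appear as $v$'s parent in the valid \sba $H$, which is $u$ by definition.

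The two nontrivial clauses I need to handle are (a) that $v$ has no demand out-neighbors in $B$, and (b) the disjunction that either $w$ is stronger than $v$ or $(u,w)\notin\mathcal{S}$.

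For (a), suppose $x\in B$ with $(v,x)\in\mathcal{S}$. Since $x$ is a child of $v$ in $H$, the arc $(v,x)$ lies in $A(T)$; combined with $v\notin V(F)$, this forces $x$ to be weaker than $v$ in $\sigma$ (otherwise $(v,x)$ would be an upset putting $v$ in $V(F)$). By the minimality choice of $v$ as the weakest certificate of non-compactness, $\alpha_H(x)=\alpha_g^*(x)$; and condition 1 of the definition of $\alpha_g^*$ applied at $v$ gives $\alpha_g^*(v)>\alpha_g^*(x)$. Thus $\alpha_H(x)<\alpha_g^*(v)=\alpha_H(w)$, so $x$ is strictly lower than $w$ and cannot lie in $B$, a contradiction.

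For (b), I argue by contradiction, assuming $(u,w)\in\mathcal{S}$ and $w$ is weaker than $v$ in $\sigma$. Then $w\in\head(\mathcal{S})$ and is weaker than $v$, so minimality again yields $\alpha_H(w)=\alpha_g^*(w)$; but also $\alpha_H(w)=\alpha_g^*(v)$ by the choice of $w$. Hence $\alpha_g^*(w)=\alpha_g^*(v)$, which directly contradicts condition 2 of the definition of $\alpha_g^*$ applied to the pair $v,w$ of demand children of $u$ (the clause ``$w$ is weaker than $v$'' triggers). The main obstacle is essentially just bookkeeping — correctly tracking which clause of the definition of $\alpha_g^*$ each requirement appeals to, and checking that the exclusion $v,w\notin V(F)$ is available wherever needed so that the minimality of $v$ and condition 2 can both be invoked cleanly.
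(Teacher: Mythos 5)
Your proof is correct and follows essentially the same route as the paper: minimality of the certificate $v$ plus Condition 1 of Definition~\ref{def:alphastarwithg} rules out a demand out-neighbor of $v$ in $B$, and minimality plus Condition 2 handles the disjunction, with the remaining premises ($H$ valid, $(u,v)\in\mathcal{S}$, $v,w\notin V(F)$, $w$ lower than $v$) drawn from the surrounding setup exactly as in the paper. If anything, your treatment of the last clause is slightly more careful than the paper's wording, since you explicitly assume both that $(u,w)\in\mathcal{S}$ and that $w$ is weaker before invoking Condition 2, which is the logically precise form of the argument.
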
  

	\begin{proof}
    	We have already argued that $\alpha_H(v) > \alpha_g^*(v)$ and $\alpha_H(w) = \alpha_g^*(v)$, implying that  $\alpha_H(w)<\alpha_H(v)$, i.e., $w$ is lower than $v$ in $H$.
    	
        Now, suppose for a contradiction that  $v$ has, among its children that are at least as heavy as $w$ (i.e., the set $B$), a demand-out-neighbor $v'$.  Since $v\notin V(F)$, it follows that the arc $(v,v')$ is not a feedback arc, implying that $v'$ is weaker than $v$. By our selection of $v$ as a certificate of non-compactness, it follows that $\alpha_H(v')=\alpha_g^*(v')$. This implies (by invoking Condition 1 in Definition~\ref{def:alphastarwithg}) that:  $$\alpha_g^*(v)>\alpha_g^*(v')=\alpha_H(v')\geq \alpha_H(w)=\alpha_g^*(v).$$
        
        This is a contradiction, hence we conclude that $v$ has no demand out-neighbors in the set $B$. 
        
        Finally, it remains to argue that either $w$ is stronger than $v$ or $(u,w)$ is not a demand match. We will argue that the former holds. Indeed, if $w$ is weaker than $v$, then this would contradict Condition 2 in Definition~\ref{def:alphastarwithg} because $\alpha_g^*(v)$ could not have  $\alpha_H(w)$ as a possible candidate value. Hence, Lemma~\ref{lem:almostAcyclicExchange} is applicable. 
    \end{proof}

	Now, let $H'$ be the valid {\sba} obtained by invoking Lemma~\ref{lem:almostAcyclicExchange} on $H,\mathcal{S}, u,v,w$. Then, we have that  $\alpha_{H'}(v) = \alpha_H(w) = \alpha^*(v)$. We have the following claim:
	
	\begin{cclaim} \label{claim:almostAcyclicWeakerUnchanged}
		For every $x$ that is weaker than $v$ and loses a demand match, $\alpha_{H'}(x)=\alpha^*(x)$. 
	\end{cclaim}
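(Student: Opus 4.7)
The plan is to leverage part 2 of Lemma~\ref{lem:almostAcyclicExchange}, which tells us that the exchange operation preserves heights for every vertex outside $B\cup\{v,w\}$. So the strategy is to show that any $x$ weaker than $v$ which loses a demand match must lie outside this set; once that is established, $\alpha_{H'}(x)=\alpha_H(x)$, and by our choice of $H$ as a valid, non-weakly-compact {\sba} with the strongest possible certificate of non-compactness, $\alpha_H(x)=\alpha^*(x)$. Note that since the heights of feedback vertices are unchanged by the exchange (as $v,w\notin V(F)$ and $B\cap V(F)=\emptyset$), the relevant function $\alpha^*$ is unaffected by the swap, so there is no ambiguity in comparing $\alpha_{H'}(x)$ against it.

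First I would rule out the easy cases. Clearly $x\neq v$ since $x$ is strictly weaker than $v$. Also $x\neq w$: the preceding claim established that we are in the subcase of Lemma~\ref{lem:almostAcyclicExchange} where $w$ is stronger than $v$, so $x$ being weaker than $v$ cannot coincide with $w$. It therefore remains to exclude the possibility that $x\in B$.

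The main obstacle is precisely this $x\in B$ case, but it turns out to be vacuous once one uses the validity of $H$ together with the standing WLOG assumption that every vertex has demand in-degree at most one. Specifically, any $x\in B$ is a child of $v$ in $H$; since $H$ is a valid \sba, the unique demand in-neighbor of $x$ (if one exists) must coincide with the parent of $x$ in $H$, namely $v$. But the hypothesis of Lemma~\ref{lem:almostAcyclicExchange} that we verified in the previous claim guarantees that $v$ has no demand out-neighbor in $B$, so $x$ has no demand in-neighbor and cannot lose a demand match. Hence no $x$ satisfying the claim's hypothesis lies in $B$, and we conclude $x\notin B\cup\{v,w\}$, so $\alpha_{H'}(x)=\alpha_H(x)=\alpha^*(x)$ as required.
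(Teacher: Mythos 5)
Your overall route is the same as the paper's: apply part~2 of Lemma~\ref{lem:almostAcyclicExchange} to all vertices outside $B\cup\{v,w\}$, use the choice of $v$ as the strongest certificate of non-compactness to get $\alpha_H(x)=\alpha^*(x)$ for weaker demand losers, and exclude $x\in B$ by observing that validity of $H$ plus demand in-degree at most one forces any demand parent of a vertex of $B$ to be its $H$-parent $v$, which has no demand out-neighbours in $B$. That part coincides with the paper's argument and is fine, as is your remark that $\alpha^*$ is unaffected because no feedback vertex changes height.

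The gap is in your exclusion of $x=w$. The preceding claim only asserts that $H,\mathcal{S},u,v,w$ satisfy the premise of Lemma~\ref{lem:almostAcyclicExchange}, i.e., the disjunction ``$w$ is stronger than $v$ \emph{or} $(u,w)\notin\mathcal{S}$''; it does not tell you which disjunct holds, and the first disjunct need not be true. Condition~2 of Definition~\ref{def:alphastarwithg} only forbids $\alpha^*_g(v)=\alpha^*_g(w)$ for weaker siblings $w$ with $(u,w)\in\mathcal{S}$, so $w$ may perfectly well be a sibling of $v$ of height $\alpha^*_g(v)$ that is weaker than $v$ but is not a demand child of $u$; in that situation your sentence ``$x$ being weaker than $v$ cannot coincide with $w$'' fails. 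The paper handles exactly this by arguing the disjunction case-wise: if $(u,w)\notin\mathcal{S}$ then, by the same validity argument you already use for $B$ (the only possible demand parent of $w$ is its $H$-parent $u$), $w$ has demand in-degree $0$, hence does not lose a demand match and is excluded by the hypothesis of the claim; if instead $w$ is stronger than $v$, then $x\neq w$ as you say. So the missing case costs only one line and uses an observation you already have, but as written your proof does not cover it, and the fact you cite (``$w$ is stronger than $v$'') is not what the preceding claim establishes.
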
 
	
	\begin{proof}
		The second statement of Lemma~\ref{lem:almostAcyclicExchange} guarantees that for every $x$ that is weaker than $v$ and disjoint from $B\cup \{v,w\}$, we have that $\alpha_{H'}(x) = \alpha_{H}(x)$. Here, $B$ is the set of children of $v$ that are at least as high as $w$ in $H$.  Since we already know that for every such $x$, $\alpha_{H}(x)=\alpha^*(x)$, we conclude that $\alpha_{H'}(x)=\alpha^*(x)$. 
		
		It remains to argue that for every $x\in B\cup \{w\}$ that is weaker than $v$ and loses a demand match, $\alpha_{H'}(x)=\alpha^*(x)$.
		We have already argued that the vertices in $B$ must have a demand in-degree of 0. Hence, $x\notin B$. We next consider the possibility that $x=w$. Recall that since Lemma~\ref{lem:almostAcyclicExchange} was applicable on $H,\mathcal{S}, u,v,w$, we know that  either $w$ has a demand in-degree of 0 or $w$ is stronger than $v$. In either case, we have that $x\neq w$, completing the proof of the claim.
	\end{proof}
	
	By definition $v \notin V(F)$ and we know that $w \notin V(F)$ since every feedback vertex has a demand parent and if $w$ shared a demand parent with $v$ then condition 2 of Definition~\ref{def:alphastarwithg} would have ensured that $\alpha^*_g(v) \neq \alpha_H(w)$.
	So, we can apply Lemma~\ref{lem:almostAcyclicExchange} to get $H'$ where $\alpha_{H'}(v) = \alpha_H(w) = \alpha^*_{\alpha_H}(v)$. Condition 2 of Lemma~\ref{lem:almostAcyclicExchange} ensures that $\alpha_H$ only differs from $\alpha_{H'}$ on $B \cup \{v, w\}$, which is disjoint from $V(F)$. Hence $\alpha^*_{\alpha_H} = \alpha^*_{\alpha_{H'}}$ so $\alpha_{H'}(v)  = \alpha^*_{\alpha_H}(v)$.
	Claim~\ref{claim:almostAcyclicWeakerUnchanged} still holds in this setting so $H'$ has a stronger certificate of non-weak-compactness which is a contradiction.
\end{proof}

Before moving to the description of our packing subroutine, we need to define a natural relaxation of {\ba} to account for feedback vertices. 

\paragraph{Partial Binomial Arborescences.}
In our greedy packing strategy, we will process vertices from weakest to strongest, with the underlying assumption that when we arrive at a vertex $v$, all descendants of $v$ are weaker and have their respective sub-arborescences built in an earlier step. When $T$ is acyclic this works fine, however this does not go smoothly when we have cycles since a descendant of the current vertex, $v$, may actually be stronger than $v$.  Hence it is not yet processed by our algorithm, and consequently its sub-arborescence is not yet built. This leads to the scenario that the \ba  in the solution that is rooted at $v$ cannot be fully built either. Notwithstanding this difficulty, we note that the partial structures our algorithm deals with have enough \ba-like properties that one direction of Proposition 2 still holds. This leads us to the following definition, which relaxes the conditions of a \ba when feedback vertices are encountered.

\begin{definition}[Feedback descendants and partial binomial arborescence] \label{def:pba}
	{\rm Given a rooted forest $Q$ and a vertex $v \in V(Q)$, define the \emph{feedback descendants of $v$ in $Q$},
	$$\fdesc_Q(v) := \bigcup_{f \in \descendant_{Q}(v) \cap V(F), f \neq v}{\descendant_{Q}(f) \setminus \{f\}}$$
	
	Given a {\ba} $H$ of height $i$, on a tournament digraph $T$ with feedback arc set $F$, we call a subtree $H'$ of $H$ a \emph{partial binomial arborescence (\pba) of height $i$} if 
	$V(H)\setminus V(H') \subseteq \fdesc_H(\troot(H))$.}
\end{definition}
In Definition~\ref{def:pba}, 
$\fdesc_Q(v)$ are strict descendants of a feedback vertex that is itself a strict descendant of $v$. Equivalently these are vertices $x$ where the path from $v$ to $x$ contains a feedback vertex that is not $v$ or $x$, that is, $x$ is ``past'' a feedback vertex. A \pba is a \ba that is missing some feedback descendants of its root. Figure \ref{fig} contains an example of feedback descendants.

\begin{observation} \label{cor:altpba}
	For $n>0$, if $H_n = \bigcup_{i=0}^{n-1}{H_i}$ where $H_i$ is a {\pba} of height $i$ rooted at $v_i$ together with an edge from $v_n$ to each $v_i$ then $H_n$ is a {\pba} of height $n$ rooted at $v_n$.
\end{observation}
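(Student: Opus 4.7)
The plan is to lift the hypothesis from the given pbas to their underlying bas, invoke Proposition \ref{prop:altba} to build a full ba of height $n$, and then verify that the resulting $H_n$ still satisfies the pba condition relative to this witness.

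First I would unpack the hypothesis: for each $i \in \{0, \ldots, n-1\}$, since $H_i$ is a {\pba} of height $i$ rooted at $v_i$, there exists (by Definition \ref{def:pba}) a {\ba} $\widehat{H}_i$ of height $i$ rooted at $v_i$ such that $H_i$ is a subtree of $\widehat{H}_i$ and $V(\widehat{H}_i) \setminus V(H_i) \subseteq \fdesc_{\widehat{H}_i}(v_i)$. Next I would apply Proposition \ref{prop:altba} to the collection $\widehat{H}_0, \ldots, \widehat{H}_{n-1}$: adding an arc from $v_n$ to each $v_i$ yields a {\ba} $\widehat{H}_n$ of height $n$ rooted at $v_n$. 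Because each $H_i$ is a subtree of $\widehat{H}_i$ and the new arcs $(v_n, v_i)$ are precisely the ones used to build $H_n$ from the $H_i$'s, the graph $H_n$ is a subtree of $\widehat{H}_n$.

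It then remains to check the pba containment condition, namely
\[
V(\widehat{H}_n) \setminus V(H_n) \;\subseteq\; \fdesc_{\widehat{H}_n}(v_n).
\]
The left-hand side decomposes as $\bigcup_{i=0}^{n-1}\bigl(V(\widehat{H}_i) \setminus V(H_i)\bigr)$, and by hypothesis each summand lies in $\fdesc_{\widehat{H}_i}(v_i)$, so it suffices to show $\fdesc_{\widehat{H}_i}(v_i) \subseteq \fdesc_{\widehat{H}_n}(v_n)$ for every $i$. This reduces to two easy observations: (a) every $f \in \descendant_{\widehat{H}_i}(v_i) \cap V(F)$ with $f \neq v_i$ satisfies $f \in \descendant_{\widehat{H}_n}(v_n)$ with $f \neq v_n$, because $v_i$ is a child of $v_n$ in $\widehat{H}_n$ and $v_n$ does not appear in any $\widehat{H}_j$ for $j < n$; and (b) $\descendant_{\widehat{H}_i}(f) = \descendant_{\widehat{H}_n}(f)$, because the construction of $\widehat{H}_n$ adds arcs only at the root $v_n$ and leaves the subtree rooted at $f$ unchanged.

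There is no real obstacle here: the statement is essentially Proposition \ref{prop:altba} with a bookkeeping check for the ``missing'' feedback descendants, and the argument is purely definitional. The only point requiring a moment of care is verifying $f \neq v_n$ in step (a), which follows because $v_n$ is the new root and the $\widehat{H}_i$'s (which contain the feedback vertex $f$) are vertex-disjoint from $\{v_n\}$.
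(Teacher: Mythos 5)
Your overall route---complete each $H_i$ to a witness \ba $\widehat{H}_i$ via Definition~\ref{def:pba}, glue the witnesses using Proposition~\ref{prop:altba}, then verify the $\fdesc$ containment---is surely the argument the paper has in mind (it states this as an observation without proof), but your write-up has a genuine gap at the gluing step. Definition~\ref{def:pba} is purely existential and is invoked separately for each $i$: nothing guarantees that the witnesses $\widehat{H}_0,\dots,\widehat{H}_{n-1}$ are pairwise vertex-disjoint, avoid $v_n$, or avoid the vertices of the other $H_j$'s. Your justification that ``the $\widehat{H}_i$'s are vertex-disjoint from $\{v_n\}$'' is precisely what needs proof and is only asserted: $v_n$, or a vertex of some $H_j$ with $j\neq i$, can perfectly well be one of the ``missing'' feedback descendants used by the witness $\widehat{H}_i$, and two witnesses can reuse the same missing vertex. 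Without pairwise disjointness the union $\bigcup_i \widehat{H}_i$ together with the arcs $(v_n,v_i)$ need not even be a tree, so Proposition~\ref{prop:altba} does not apply and $\widehat{H}_n$ is not a \ba of height $n$; your steps (a) and (b) also silently rely on this disjointness.

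This is not mere bookkeeping. Take a tournament whose single upset arc is $(f,v_1)$, with strength order $v_3\succ v_2\succ v_1\succ u\succ v_0\succ w_0\succ z\succ f$, so $V(F)=\{f,v_1\}$. Let $H_2$ have root $v_2$ with leaves $w_0,f$: its \emph{only} witness \ba of height $2$ hangs $v_1$ below $f$ (since $f$ beats only $v_1$), so this witness collides with $H_1 = v_1\to u$. Gluing $H_0=\{v_0\}$, $H_1$, $H_2$ under $v_3$ gives a tree $H_3$ whose only extension to a \ba of height $3$ must attach the leftover vertex $z$ below $w_0$, and $z$ is not in $\fdesc$ of $v_3$ there; so $H_3$ fails the \pba condition under the literal definition. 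Hence the observation cannot be established by independently completing the $H_i$'s: a correct argument must either show the witnesses can be chosen disjointly (which the bare hypotheses do not supply) or exploit the context in which the paper actually invokes the observation, namely that all the {\pba}s in play arise as subtrees of a single common ambient \ba, which then serves as one common witness for $H_n$.
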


\paragraph{Guessed size.}
Definition \ref{def:pba} defines the height of a \pba implicitly. Our algorithm constructs an \sba by gluing {\pba}s of specific heights together so it needs their heights, or equivalently an appropriate notion of their size, to see if there are any good candidate {\pba}s to glue together. The following definition allows us to calculate the size of a \pba.

\begin{definition}
{\rm 	Given a rooted forest $Q$, and  $g : V(F) \rightarrow [\log(n)]$ define the \emph{guessed size of $v$},
	$$\beta_{Q,g}(v) = \begin{cases}
		2^{g(v)}  & \text{ if } v \in V(F)\\
		1 + \sum_{w \in \child_Q(v)}{\beta_{Q,g}(w)} & \text{ otherwise}
	\end{cases}$$
	We will drop the reference to $g$ when it is clear from the context.}
\end{definition}

Note that if $H$ is an {\sba} and $g$ is $\alpha_H$ restricted to $V(F)$, then $\beta_{H, g}(v) = |\descendant_H(v)| = 2^{\alpha_H(v)}$ for all $v$.
Furthermore if $H' \subset H$ is a {\pba} of height $i$ rooted at $v$ then $\beta_{H', g}(v) = 2^i$.
Effectively, deleting feedback descendants of $v$ does not change the value of $\beta(v)$.
Note that the converse is not true so we will need to prove that a given subgraph is a \pba and $\beta$ will then check its height.

Since $Q$ is a rooted forest, each vertex is the child of at most one vertex. So applying the definition recursively will only require calculating $\beta_Q(w)$ once for each vertex $w \in V(Q)$. Therefore $\beta_Q(v)$ can be calculated in polynomial time for any $Q$ and $v$ directly from the definition. 

\begin{figure}[t]
	\centering
    \includegraphics[scale=0.78]{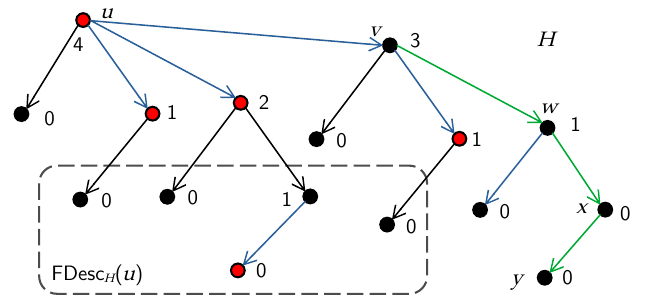}
    \caption{A \ba $H$ and the feedback descendants of its root. Feedback vertices are in red and demand arcs are in blue. $\alpha^*_g$ (where $g$ is $\alpha_H$ restricted to $V(F)$) is noted next to each vertex. $\alpha^*_g(v)=3$ due to its demand siblings. The green arcs are those that would be added during the inner loop of the algorithm when $v_{n-i}=v$ and $j=2$. The call to \textsc{Pack} would use $P=\{w,x,y\}$, add the arcs $(w,x)$ and $(x,y)$ and output $w$. Finally Step \ref{step:edge} would add the arc $(v, w)$.}
    \label{fig}
\end{figure}

\paragraph{The subroutine \textsc{Pack}.}

The following lemma describes a crucial subroutine for us. Informally, the subroutine creates a \pba of height $j$ by adding arcs between the provided vertices in one of its inputs ($P$) and then outputs the root of this \pba. Crucially, it is a very local algorithm: it only affects vertices from $P$. This allows us to repeatedly call it without undoing work it has already done in a previous call.
In order to describe this property we will need the following definition: suppose $Q$ and $Q'$ are both rooted forests, then define $\descendant_{Q'}^{Q}(v)$ as the set of vertices that are descendants of $v$ in $Q'$ but that have no parent in $Q$.
Note that all of these vertices except $v$ must have a parent in $Q'$.

\begin{lemma}[Packing lemma] \label{lem:pack}
	There is a polynomial-time algorithm \textsc{Pack} that:
	\begin{itemize}
		\item Takes as input a tuple $(Q, P, j)$ such that:
		\smallskip
		\begin{enumerate}
			\item $Q \subset T$ is a valid rooted forest.
			\item For every $w \in P$:
			\begin{enumerate}
				\item $Q[\descendant_Q(w)]$ is a \pba of height at most $j$.
				\item $w$ has no parent in $Q$.
			\end{enumerate}
			\item $\sum_{w \in P}{\beta_{Q}(w)} \geq 2^j$
		\end{enumerate}
		\medskip
		\item Outputs a tuple $(Q', v)$ such that:
		\smallskip
		\begin{enumerate}
			\item $Q' \subset T$ is a valid rooted forest and $Q'$ is a supergraph of $Q$.
			\item $Q'[\descendant_{Q'}(v)]$ is a \pba of height $j$.
			\item $v$ has no parent in $Q'$.
			\item If $Q'[\descendant_{Q'}(x)] \neq Q[\descendant_Q(x)]$ then $x \in \descendant_{Q'}^Q(v)$. Additionally $\descendant_{Q'}^Q(v) \subseteq P$. That is, all vertices affected by \textsc{Pack} become descendants of $v$ in $Q'$ and were from $P$.
			\item For all $s \in P \setminus \descendant_{Q'}^Q(v)$ and $t \in \descendant_{Q'}^Q(v)$ we have $\beta_Q(s) \leq \beta_Q(t)$. That is, the algorithm uses the vertices with largest height.
		\end{enumerate}
	\end{itemize}
\end{lemma}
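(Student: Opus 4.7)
The plan is to carry out the packing in two phases: first select which elements of $P$ to use, then iteratively merge the selected \pba s into a single \pba of height $j$.

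For the selection phase, sort $P$ in decreasing order of $\beta_Q$-value as $w_1, \ldots, w_m$ and let $k$ be the smallest index with $C_k := \sum_{i=1}^k \beta_Q(w_i) \geq 2^j$. Since $Q[\descendant_Q(w_i)]$ is a \pba of some height $h_i \leq j$, each $s_i := \beta_Q(w_i)$ equals $2^{h_i}$ (using the noted identity that $\beta$ of the root of a \pba of height $h$ is $2^h$), and $h_1 \geq h_2 \geq \cdots$ by the sort. The central arithmetic claim is that $C_k = 2^j$ exactly; I expect this to be the main obstacle. If $h_k = j$ then $k = 1$ trivially. Otherwise let $a := h_k < j$ and let $p$ be the smallest index with $s_p = 2^a$; every $s_i$ with $i < p$ is at least $2^{a+1}$, so $C_{p-1}$ is divisible by $2^{a+1}$, as is $2^j$. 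Writing $2^j - C_{p-1} = 2^{a+1} t$ and using $C_i = C_{p-1} + (i - p + 1) 2^a$ for $p \leq i \leq k$, the conditions $C_{k-1} < 2^j \leq C_k$ become $k - p \leq 2t - 1$ and $k - p + 1 \geq 2t$, forcing $k - p + 1 = 2t$ and hence $C_k = 2^j$.

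For the merging phase, maintain a multiset $M$ of \pba s, initially $\{Q[\descendant_Q(w_i)] : i \leq k\}$, together with the invariant $\sum_{H \in M} 2^{h_H} = 2^j$ where $h_H$ denotes the structural height of $H$. While $|M| > 1$, pick any two $H_a, H_b \in M$ of equal height $h$ rooted at $a$ and $b$. Such a pair must exist: since $|M| > 1$, the invariant forces every height in $M$ to be at most $j - 1$ (a single $H$ of height $j$ would already account for the full $2^j$), so if all heights were distinct their total would be at most $\sum_{i=0}^{j-1} 2^i = 2^j - 1 < 2^j$. Add the unique arc between $a$ and $b$ in $T$, say $(a, b)$ with $a$ beating $b$, and replace $\{H_a, H_b\}$ in $M$ by the combined tree rooted at $a$. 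By the direct analogue of Definition~\ref{def:binomialArborescences} lifted to \pba s --- the key check being that any vertex missing from $H_a$ or $H_b$ was a strict descendant of some feedback vertex $f \neq a$ lying strictly below the old root, so $f$ remains strictly below the new root $a$ in the combined tree --- the combined tree is a \pba of height $h + 1$, preserving the invariant. When $|M| = 1$, its single element is a \pba of height $j$, whose root we output as $v$ together with the updated forest $Q'$.

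The remaining output conditions are routine. $Q'$ inherits validity from $Q$: any demand arc with both endpoints in $V(Q') = V(Q)$ is already in $Q$ (since $Q$ is valid), hence in $Q' \supseteq Q$. Combination arcs connect roots of disjoint subtrees of $Q$, so no cycle is created and each affected vertex acquires at most one parent. The final root $v$ never acquires a parent since every added arc leaves the designated winner. Conditions (4) and (5) hold because every vertex whose subtree changes is either $v$ or a former root $w_i$ ($i \leq k$) absorbed into $v$'s subtree (none of which had a parent in $Q$ by hypothesis), while $\{w_1, \ldots, w_k\}$ is by construction a $\beta_Q$-maximal prefix of $P$. Finally, sorting costs $\mathcal{O}(m \log m)$ and the merging phase runs for at most $|P| - 1$ iterations of polynomial cost each, giving the desired polynomial-time algorithm.
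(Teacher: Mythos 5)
Your proposal is correct and establishes all five output conditions, but it is organized differently from the paper's proof, so a comparison is worthwhile. The paper runs a single greedy loop: among the current parentless roots it repeatedly picks the two of largest equal $\beta$, adds an arc from the $\sigma$-stronger to the other, and returns as soon as some root reaches $\beta=2^j$; exactness is obtained not by prefix-sum arithmetic but by noting that the total $\beta$ over roots is conserved, every value is a power of two, and a root's $\beta$ at most doubles per merge, so the process cannot jump past $2^j$. You instead separate selection from merging: you pre-select the maximal-$\beta$ prefix of $P$ and prove, via the explicit divisibility argument with $p$, $t$ and $C_k$, that the minimal prefix reaching $2^j$ sums to exactly $2^j$, and then merge arbitrary equal-height pairs under the invariant that the heights sum to $2^j$. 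This buys a cleaner derivation of conditions (4) and (5) (the used set is literally a top-$\beta$ prefix) at the cost of an arithmetic lemma the paper's conservation-plus-doubling argument sidesteps; your existence argument for an equal-height pair is essentially the paper's power-of-two accounting in different clothing. One substantive difference: you orient each new arc from the actual winner in $T$, which makes $Q'\subset T$ immediate, whereas the paper orients it from the $\sigma$-stronger root and justifies that this arc lies in $T$ using the standing assumption that parentless vertices are not heads of feedback arcs; both satisfy the lemma as stated, though the stronger-to-weaker orientation is the form the paper implicitly leans on when \textsc{Pack} is reused in the main algorithm's analysis. Finally, your appeal to ``two equal-height {\pba}s joined by an arc form a {\pba} of one greater height'' is at the same level of rigor as the paper's own treatment (which asserts it essentially by definition, cf.\ Observation~\ref{cor:altpba}), so I do not count that as a gap.
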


\begin{proof}
	Initially let $Q_0 = Q$ and $P_0 = P$.
	For each $i \geq 0$ either:
	\begin{itemize}
		\item There exists $v \in P_i$ with $\beta_{Q_i}(v) = 2^j$. Then return $(Q_i, v)$.
		\item Or there is no such $v$. Then let $x, y$ be two vertices of largest $\beta$ with $x$ the stronger. More precisely, choose $x, y \in P_i$ such that $\beta_{Q_i}(x) = \beta_{Q_i}(y)$ and for all other $a, b \in P_i$ satisfying $\beta_{Q_i}(a) = \beta_{Q_i}(b)$ we have $\beta_{Q_i}(a) \leq \beta_{Q_i}(x)$. Let $Q_{i+1} = Q_i \cup \{(x, y)\}$ and $P_{i+1} = P_i \setminus \{y\}$ then repeat for the next $i$.
	\end{itemize}
    \paragraph{Correctness}
    We first show that the total value of $\beta$ across the vertices of $P_i$ is conserved. This is because, when we remove $y$ from $P_i$ after making $y$ a child of $x$, $\beta_{Q_i}(y)$ gets added to $\beta_{Q_{i+1}}(x)$. The formal argument follows. 
    Let $(x, y)$ be the arc present in $Q_{i+1}$ but not $Q_i$. Then $P_{i+1} = P_i \setminus \{y\}$.
    For every vertex $w \in P_{i+1}$ except $x$ we have $\beta_{Q_{i+1}}(w) = \beta_{Q_i}(w)$.
    Also, every vertex in $P_{i+1} \subseteq P$ has no parent in $Q$ and so, is not a feedback vertex. In particular, 
    \begin{align}
    	\beta_{Q_{i+1}}(x) &= 1 + \sum_{w \in \child_{Q_{i+1}}(x)}{\beta_{Q_{i+1}}(w)} \nonumber \\
    	&= \beta_{Q_i}(y) + 1 + \sum_{w \in \child_{Q_i}(x)}{\beta_{Q_i}(w)} \nonumber \\
    	&= \beta_{Q_i}(y) + \beta_{Q_i}(x). 
    	\label{eq:betasum}
    \end{align}
    Note that, since the descendants of every vertex $w \in P$ in $Q$ form a \pba, $\beta_Q(w)$ is a power of two and hence (\ref{eq:betasum}) shows that for all $i \geq 0$ and $w \in P_i$ we have $\beta_{Q_i}(w)$ is a power of two.
    Now
    \begin{flalign}
    	&&\sum_{w \in P_{i+1}}{\beta_{Q_{i+1}}(w)} &= \beta_{Q_{i+1}}(x) + \sum_{w \in P_{i+1} \setminus \{x\}}{\beta_{Q_i}(w)} && \nonumber \\ 
    	&& &=  \beta_{Q_i}(x) + \sum_{w \in P_i \setminus \{x\}}{\beta_{Q_i}(w)} && \nonumber \\ 
    	&& &= \sum_{w \in P_i}{\beta_{Q_i}(w)}
    	\label{eq:sumpreserved}
    \end{flalign}
    We can now argue that the algorithm always terminates. Let $n = |P|$. In each iteration one vertex is removed from $P_i$, so after $n - 1$ iterations, if the algorithm has not already returned, there is only one vertex remaining in $P_{n - 1}$, call it $x$. Since the total value of $\beta$ across the vertices of $P_i$ is conserved we know that $\sum_{w \in P_{n-1}}{\beta_{Q_{n-1}}(w)} = \beta_{Q_{n - 1}}(x) \geq 2^j$. Each vertex in $P_0$ is the root of a {\pba} in $Q_0$, after joining two such vertices by an arc we have a {\pba} of height one larger by definition. So for all $w \in P_i$, $\beta_{Q_{i+1}}(w) \leq 2\beta_{Q_i}(w)$ and both numbers are powers of two. Hence the algorithm will return before we have $\beta_{Q_i}(w) > 2^j$. Hence, $\beta_{Q_{n - 1}}(x) \leq 2^j$, implying that  $\beta_{Q_{n - 1}}(x)=2^j$ and the algorithm will return $(Q_n, x)$. 

    We now check that $Q'$ satisfies each of the conditions of the lemma:
    \begin{enumerate}
    	\item By construction, $Q'$ is a supergraph of $Q$ and a valid rooted forest (recall $Q$ is already valid). The only arcs in $Q'$ but not $Q$ are chosen to be from a stronger to a weaker vertex. Since at most one vertex without a parent is a feedback vertex (the root) and the new arcs are always between vertices of $P$ which all have no parent, $Q' \subset T$.
    	\item Each vertex in $P_0$ is the root of a {\pba} in $Q_0$, after joining two such vertices by an arc we have a {\pba} of height one larger by definition. So, for each $i$, the descendants of every vertex in $P_i$ form a {\pba} in $Q_i$, in particular $v$ in $Q'$. The height of this {\pba} rooted at $v$ is $j$ since $\beta_{Q'}(v) = 2^j$.
    	\item Every vertex in $P_0$ has no parent in $Q = Q_0$. Any vertex that is given a parent in $Q_{i+1}$ is removed from $P_{i+1}$ so the invariant that each vertex in $P_i$ has no parent in $Q_i$ is maintained. Since $v$ is chosen from $P_i$ it has no parent in $Q'$.
    	\item Every arc is added between vertices of $P$ that have the largest $\beta$ in $P_i$. Since after this the new root has even larger $\beta$, it will either be returned in the next iteration or have a new arc added in future. Hence the endpoints of every new arc end up forming a {\pba} rooted at $v$ and become descendants of $v$ in $Q'$ and since they are in $P$ they had no parent in $Q$. These are the only vertices whose descendants change since $v$ has no parent in $Q'$.
    	\item This is ensured by our choice of $x, y$.
    \end{enumerate}
\end{proof}

We will also need the following corollary of Lemma \ref{lem:pack}.
\begin{corollary} \label{cor:packwithba}
	Lemma \ref{lem:pack} holds if every occurrence of {\pba} is replaced by {\ba}.
\end{corollary}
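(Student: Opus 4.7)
The plan is to reuse the \textsc{Pack} algorithm from Lemma \ref{lem:pack} verbatim and verify that when the input supplies {\ba}s in place of {\pba}s, the stronger {\ba} structure is preserved by every merge step, so the output that is declared a {\pba} in Lemma \ref{lem:pack} is in fact a {\ba}. Every {\ba} is trivially a {\pba} (take the set of missing feedback descendants to be empty), so the preconditions of Lemma \ref{lem:pack} are met, the algorithm terminates, and the returned tuple $(Q', v)$ already satisfies every conclusion of the lemma with \pba in place of \ba. All that remains is to strengthen the structural claim about the piece rooted at $v$.

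The key step is an inductive check of the inner loop. At iteration $i$, the algorithm picks $x, y \in P_i$ with $\beta_{Q_i}(x) = \beta_{Q_i}(y)$ of largest value and adds the arc $(x, y)$. Inductively, $Q_i[\descendant_{Q_i}(x)]$ and $Q_i[\descendant_{Q_i}(y)]$ are both {\ba}s; since $\beta$ exactly counts descendants inside a {\ba} (there are no missing feedback descendants to account for), the equality $\beta_{Q_i}(x) = \beta_{Q_i}(y)$ forces the two substructures to have the same height, say $h$. Definition \ref{def:binomialArborescences} then says that joining two height-$h$ {\ba}s by an arc between their roots yields a {\ba} of height $h+1$, so $Q_{i+1}[\descendant_{Q_{i+1}}(x)]$ is again a {\ba}. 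Every other vertex in $P_{i+1}$ still roots the same {\ba} it rooted in $Q_i$ because the algorithm only touches descendants of $x$, and $y$ is removed from $P_{i+1}$. Consequently the invariant ``each $w \in P_i$ roots a {\ba} in $Q_i$'' is maintained throughout, and at termination $v$ roots a {\ba}.

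The remaining output properties of Lemma \ref{lem:pack} (validity of $Q'$ as a rooted forest, $v$ having no parent in $Q'$, locality of changes via $\descendant_{Q'}^{Q}(v) \subseteq P$, and the largest-heights-first selection rule) depend only on the merging discipline and not on which flavour of arborescence is being tracked, so they transfer with no change and no re-proof is required. I do not anticipate any real obstacle here: the {\ba} case is actually the easier one, since the precise correspondence $\beta(w) = |\descendant(w)|$ now holds and the potential discrepancy that motivated introducing {\pba}s (feedback descendants missing from partially built structures) simply does not arise when the inputs are genuine {\ba}s.
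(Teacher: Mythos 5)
Your argument is correct and is essentially the justification the paper intends: the paper states Corollary~\ref{cor:packwithba} without any proof, and your loop invariant --- every vertex of $P_i$ roots a {\ba} in $Q_i$, equal $\beta$ forces equal heights, and adding the arc $(x,y)$ joins two equal-height {\ba}s into a {\ba} of height one larger (Definition~\ref{def:binomialArborescences}) --- is exactly the routine induction that makes the corollary immediate from the proof of Lemma~\ref{lem:pack}. The one tacit step, namely that $\beta_{Q_i}$ equals the true descendant count inside a {\ba} (so that equal $\beta$ really means equal height), presumes that the guessed heights $g$ of any feedback vertices lying strictly inside the input {\ba}s match their actual subtree heights; this is the same implicit assumption the paper makes in the proof of Lemma~\ref{lem:pack} itself (e.g.\ when it asserts that $\beta_Q(w)$ is a power of two because $w$ roots a {\pba}), and it is satisfied in every invocation made by the main algorithm, so it does not constitute a gap relative to the paper.
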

Informally this says that the same subroutine can be used to pack {\ba}s: by replacing {\pba}s with {\ba}s in the input we get a \ba in the output.

We can now present our main result.

\begin{theorem} \label{thm:demandtfxp}
	An instance of {\sc Demand-TF}, $(T, \cS)$, can be solved in time $n^{\bigoh(k)}$ where $k$ is the feedback arc set number of $T$.
\end{theorem}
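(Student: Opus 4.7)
The plan is to reduce the problem to a polynomial-time construction by first enumerating structural information about the feedback vertices. Following the setup established above, I will exhaustively guess (i) the demand parent of every feedback vertex, with overhead $n^{2k}$, and (ii) the height $g(f) \in [\log n]$ that each feedback vertex attains in a hypothetical solution {\sba}, with overhead $(\log n)^{2k}$. After adding the guessed demand arcs to $\cS$ and computing $\alpha^*_g$, it then suffices to design a polynomial-time procedure that either constructs a valid weakly compact {\sba} consistent with $g$, or reports that none exists; by Lemma~\ref{lem:alwaysweaklycompact}, trying all guesses and running this procedure is a sound and complete algorithm.

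The polynomial-time procedure will be a greedy bottom-up builder that processes the vertices of $T$ in increasing order of strength (i.e., $v_n, v_{n-1}, \ldots, v_1$), maintaining a rooted forest $Q$ which will ultimately become the desired {\sba}. When a non-feedback vertex $v$ is processed, the goal is, by Corollary~\ref{cor:altpba}, to make $v$ the root of a {\pba} of exact height $\alpha^*_g(v)$ by choosing, for each $j = 0, 1, \ldots, \alpha^*_g(v)-1$, a {\pba} of height $j$ currently rooted in $Q$ and attaching it below $v$ via a single arc. For heights $j$ for which $v$ has a demand child $w$ with $\alpha^*_g(w) = j$ (unique by Condition~2 of Definition~\ref{def:alphastarwithg}), the already-built {\pba} rooted at $w$ is used directly. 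For the remaining heights, the subroutine \textsc{Pack} from Lemma~\ref{lem:pack} is invoked on the set $P$ of currently-available weaker roots (excluding vertices reserved as future demand children) with target height $j$, synthesising the required {\pba} from smaller available pieces. Feedback vertices are integrated into the pool using $\beta(f)=2^{g(f)}$, so that they behave as atomic roots whose actual descendants may be filled in past them, exactly as captured by the definition of \pba.

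The main obstacle is proving that the algorithm never gets stuck---that every call to \textsc{Pack} has its precondition $\sum_{w\in P}\beta_Q(w)\geq 2^j$ satisfied, and that its greedy choices never consume a sub-{\pba} that a later, stronger vertex genuinely needed. To this end, I plan to anchor the analysis on a hypothetical weakly compact solution $H$ guaranteed by Lemma~\ref{lem:alwaysweaklycompact} (when the instance is a yes-instance and the current guesses agree with $H$). The accounting is then: for every $v$ and every $j$ required in building its {\pba}, the total $\beta$-mass of available roots must be at least $2^j$, which follows because in $H$ a corresponding sub-{\pba} of size $2^j$ built from weaker, currently-unattached vertices exists by construction. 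For greedy-safety, property~(5) of Lemma~\ref{lem:pack} guarantees that \textsc{Pack} always consumes the largest available candidates; an exchange argument in the spirit of Lemma~\ref{lem:almostAcyclicExchange} then lets us rearrange $H$ so that it agrees with \textsc{Pack}'s choices, since replacing larger pieces by aggregates of smaller pieces can only free up flexibility for subsequent higher levels. Combining the $n^{\bigoh(k)}$ guessing overhead with the polynomial-time inner procedure (comprising $\bigoh(n\log n)$ calls to the polynomial-time \textsc{Pack}) yields the claimed running time of $n^{\bigoh(k)}$.
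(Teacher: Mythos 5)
Your proposal follows essentially the same route as the paper's proof: guess the parent and the height $g$ of each feedback vertex ($n^{\bigoh(k)}$ overhead), compute $\alpha^*_g$, process vertices weakest-to-strongest while using \textsc{Pack} to give each vertex the children of heights required by Proposition~\ref{prop:altba}/Corollary~\ref{cor:altpba}, and prove completeness by comparing against a weakly compact solution from Lemma~\ref{lem:alwaysweaklycompact}. The one place you are lighter than the paper is the converse direction: your step ``rearrange $H$ so that it agrees with \textsc{Pack}'s choices'' is precisely the content of the paper's Lemma~\ref{lem:converseexchange}, which is not a direct reuse of Lemma~\ref{lem:almostAcyclicExchange} but a separate iterative swap (choosing the balancing set $Z$ and re-invoking \textsc{Pack}) -- still, the overall plan and accounting are the same.
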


We first describe the algorithm claimed in the above statement, following which we give a proof of correctness. 

In our algorithm, we first guess an injective $p : V(F) \rightarrow V(T) \cup \{\bot\}$ representing the guessed parent of each feedback vertex. Then, for each $v \in V(F)$, we add the arc $(p(v), v)$ to $\cS$ (unless $v \in \head(\cS)$ or $p(v) = \bot$) to ensure this guess is honored by the algorithm and justify our assumption that every feedback vertex has a parent. We then guess $g : V(F) \rightarrow [\log(n)]$ representing the guessed heights of the feedback vertices and calculate $\alpha^*_g$. Now we can sanity check $g$: if $(u, v) \in \cS$ we check that  $\alpha^*_g(u) > \alpha^*_g(v)$ and if $(u, v), (u,w) \in \cS$ we check that $\alpha^*_g(v) \neq \alpha^*_g(w)$. Finally, if $p(v) = \bot$, we check that $g(v)=\log(n)$ since this is guessing that $v$ has no parent, i.e., it is the root.
Following this, we call Algorithm \ref{alg} once with each possible combination of the guesses of $g$ and value of $\Scal$ given by our guess of $p$. We reject the input if every invocation of Algorithm \ref{alg} fails to output a solution. 

\begin{algorithm}[h]
	\caption{}  \label{alg}
	$Q_{0,0} \leftarrow (V(T), \mathcal{S})$\;
	\For {$0 \leq i < n$}{
		\For {$0 \leq j < \alpha^*_g(v_{n-i})$}{
			\eIf {there exists $y \in \child_{Q_{i, 0}}(v_{n-i})$ with $\beta_{Q_{i, 0}}(y) = 2^j$ \label{step:if}}{
				$Q_{i, j+1} \leftarrow Q_{i, j}$\;
			}{
				Let $P_{i,j}$ be the set of vertices, $w$, that are weaker than $v_{n-i}$, have no parent in $Q_{i,j}$, and have $\beta_{Q_{i,j}}(w) \leq 2^j$\;
				\If {$\sum_{w \in P_{i, j}}{\beta_{Q_{i,j}}(w)} < 2^j$}{
					Reject\; \label{step:reject}
				}
				$(\widehat{Q}_{i, j+1}, w_{i, j}) \leftarrow \textsc{Pack}(Q_{i, j}, P_{i, j}, j)$\; \label{step:pack}
				$Q_{i, j+1} \leftarrow \widehat{Q}_{i, j+1} \cup \{(v_{n-i}, w_{i,j})\}$\; \label{step:edge}
			}
		}
		$Q_{i+1,0} \leftarrow Q_{i, \alpha^*_g(v_{n-i})}$\; \label{step:heightensured}
	}
	Let $P^*$ be the set of vertices without parents in $Q_{n,0}$\;
	\If{$\sum_{z \in P^*}{\beta_{Q_{n,0}}(z)} < n$}{
		Reject\; \label{step:rejectend}
	}
	$(Q^*, v^*) \leftarrow \textsc{Pack}(Q_{n,0}, P^*, \log(n))$\; \label{step:finalpack}
	\Return $Q^*$\;
\end{algorithm}

The outer loop iterates over each vertex from weakest to strongest. By Proposition \ref{prop:altba}, $v_{n-i}$ needs to have a child of height $j$ for each $j$ that the second loop considers. Step \ref{step:if} checks if such a child already exists (this happens when it is a demand child). Otherwise we check if there are enough vertices that could become descendants of $v_{n-i}$ and then call \textsc{Pack} to create such a child (see Figure \ref{fig}). After applying this process to every vertex we will have a number of {\ba}s that we can then pack into an \sba at Step \ref{step:finalpack}.

Recall that $\alpha^*_g$ can be calculated in polynomial time. Also $\alpha^*_g \leq \log(n)$ so both loops run at most $n$ times. Recall that each value of $\beta$ can be calculated in polynomial time. There are at most $n$ children of any vertex so the existence of $y$ can be checked in polynomial time. Each $P_{i,j}$ can be calculated in polynomial time by checking whether each vertex satisfies the conditions on vertices of $P_{i,j}$. Finally \textsc{Pack} is a polynomial-time subroutine. So since there are $\log(n)^{\bigoh(k)}$ possible values for $g$ and $n^{\bigoh(k)}$ possible values for $p$ (and the resulting value of $\Scal$), the overall running time is $n^{\bigoh(k)}$. 

\begin{proof}[Proof of correctness]
We first consider the case where the algorithm outputs $Q^*$ and prove that $Q^*$ is a valid \sba and hence the algorithm has correctly identified a positive instance.

\begin{lemma} \label{lem:correctdescendants}
	For each $0 \leq i < n$, if $v_{n-i}$ has no parent in $Q_{n,0}$, then the descendants of $v_{n-i}$ in $Q_{n,0}$ form a \ba of height $\alpha^*_g(v_{n-i})$
\end{lemma}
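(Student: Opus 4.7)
The plan is to prove this by induction on $i$, strengthening the conclusion as follows: for every $i'$ with $0 \leq i' \leq i$, at the end of iteration $i$ of the outer loop (that is, in $Q_{i+1,0}$), if $v_{n-i'}$ has no parent in $Q_{i+1,0}$, then its descendants form a \pba rooted at $v_{n-i'}$ of height $\alpha^*_g(v_{n-i'})$. The lemma is then the case $i=n-1$, combined with the observation that in $Q_{n,0}$ every feedback vertex has itself been processed, so all feedback descendants the \pba was allowed to omit have in fact been attached, upgrading the \pba to a full \ba of the same height.

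The heart of the inductive step is the iteration that just processed $v_{n-i}$. The inner loop is designed to ensure that for each $j\in\{0,\ldots,\alpha^*_g(v_{n-i})-1\}$, $v_{n-i}$ ends up with exactly one child whose subtree is a \pba of height $j$: either step \ref{step:if} reuses an existing such child, or \textsc{Pack} produces a fresh one and step \ref{step:edge} attaches it. The children of $v_{n-i}$ already present in $Q_{i,0}$ are exactly its demand children and the feedback vertices preloaded via the guess $p$; the two clauses of Definition \ref{def:alphastarwithg} together with the sanity-checks on $g$ force these to have pairwise distinct $\beta$-values, each of the form $2^{j'}$ for some $j'<\alpha^*_g(v_{n-i})$. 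For each such existing child $y$, its subtree in $Q_{i,0}$ is already a \pba of the appropriate height: either by the inductive hypothesis (if $y$ is non-feedback and weaker) or, if $y\in V(F)$, because the only vertices below $y$ that have not yet been attached are strict descendants of $y$ and therefore feedback descendants of $v_{n-i}$. For the children produced by \textsc{Pack}, the preconditions hold: the inductive hypothesis makes each vertex in $P_{i,j}$ a root of a \pba of height at most $j$, and the algorithm would have rejected at step \ref{step:reject} if the total $\beta$-mass were insufficient. Observation \ref{cor:altpba} then packages the children of heights $0,\ldots,\alpha^*_g(v_{n-i})-1$ into a \pba rooted at $v_{n-i}$ of the desired height.

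For vertices $v_{n-i'}$ processed earlier, I would argue that the subtree can only be augmented by later iterations, and any augmentation preserves the \pba property. Iteration $i$ touches $v_{n-i'}$'s subtree only if $v_{n-i}$ is already a strict descendant of $v_{n-i'}$; since $v_{n-i}$ is strictly stronger than $v_{n-i'}$, the path between them in the current graph must traverse a feedback arc and hence contain a feedback vertex $f\neq v_{n-i'}$, so every newly attached vertex becomes a strict descendant of $f$, i.e., a feedback descendant of $v_{n-i'}$. I expect the main obstacle to lie in the bookkeeping of the inductive step --- showing that the demand children and preloaded feedback-vertex children together never yield two children of the same $\beta$, and never leave behind a height $j$ on which \textsc{Pack} would fail. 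This is where the distinct-height condition of Definition \ref{def:alphastarwithg}, the sanity-checks on $g$, and the fact that any demand arc from $v_{n-i}$ to a strictly stronger player must itself be a feedback arc (making its head a feedback vertex) all need to line up simultaneously.
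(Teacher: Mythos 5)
Your plan follows the same route as the paper's proof (weakest-to-strongest induction over the algorithm's iterations, \textsc{Pack}'s guarantees plus non-rejection at Step~\ref{step:reject} for the new children, Observation~\ref{cor:altpba} to assemble them, preservation under later iterations via a feedback vertex on the path, and a final upgrade from \pba to \ba), but the inductive invariant you chose is too weak, and this creates a genuine gap. You state the hypothesis only for vertices that have \emph{no parent} in the current forest, yet in the step case you invoke it for the children of $v_{n-i}$ that already exist in $Q_{i,0}$ --- and those are precisely demand children, which have had a parent ever since $Q_{0,0}=(V(T),\mathcal{S})$, so your hypothesis never applies to them at any snapshot. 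The same problem undermines your claim that the pre-existing children have pairwise distinct $\beta$-values of the form $2^{j'}$ with $j'<\alpha^*_g(v_{n-i})$: for a non-feedback demand child this is exactly the assertion that its subtree is a \pba of height $\alpha^*_g$, i.e.\ the statement your unavailable hypothesis was meant to supply. The paper's Claim~\ref{cl:childrenheights} is formulated to close exactly this hole: the invariant covers $v_{n-i}$ if \emph{either} $v_{n-i}\in\head(\mathcal{S})$ \emph{or} it has no parent, and it is asserted for every intermediate graph $Q_{i',\ell}$ rather than only the end-of-iteration snapshots $Q_{i+1,0}$ (the latter matters because \textsc{Pack}'s preconditions must be checked at $Q_{i',j}$, after earlier inner-loop calls have already modified the forest; one then uses the locality property of \textsc{Pack}, output condition~4 of Lemma~\ref{lem:pack}, to see that parentless vertices' subtrees are untouched within an iteration).

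Two smaller points of detail: your closing sentence that in $Q_{n,0}$ ``all feedback descendants have in fact been attached'' is the right idea but is not yet an argument --- the omitted vertices sit below feedback vertices whose own subtrees are again only \pba{}s, so one needs the recursive/innermost-counterexample argument the paper gives (every child subtree of a minimal violator is a full \ba, so the only possible failure is a feedback vertex missing a child of some height $j<\alpha^*_g$, which the inner loop rules out). And the preservation step should be argued as in the paper, noting that material added later below a feedback vertex $z$ inside an earlier subtree has height below $\alpha^*_g(z)$ and hence does not disturb the \pba property or its height. With the invariant strengthened to include $\head(\mathcal{S})$ and these details filled in, your argument becomes essentially the paper's proof.
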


\begin{proof}
	We will begin with the following inductive hypothesis.
	\begin{hypothesis}
		For all $i < i'$, for all $j < \alpha^*_g(v_{n-i})$, either:
		\begin{itemize}
			\item there exists $y \in \child_{Q_{i, j}}(v_{n-i})$ such that the descendants of $y$ in $Q_{i,j}$ form a \pba of height $j$, or
			\item $(Q_{i,j}, P_{i,j}, j)$ satisfy the conditions of Lemma \ref{lem:pack}.
		\end{itemize} 
	\end{hypothesis}
	
	When $i=0$ there are no vertices weaker than $v_{n-i} = v_n$ and hence $P_{0,j} = \emptyset$. So the first possibility in the hypothesis must occur for every $j$ since otherwise the algorithm would have rejected at Step \ref{step:reject}.
	
	Now we use this hypothesis to prove the following claim.
	\begin{cclaim} \label{cl:childrenheights}
		For all $i < i'$, for all $\ell \leq \alpha^*_g(v_{n-i'})$, if either $v_{n-i} \in \head(\mathcal{S})$ or $v_{n-i}$ has no parent in $Q_{i',\ell}$, then the descendants of $v_{n-i}$ in $Q_{i',\ell}$ form a \pba of height $\alpha^*_g(v_{n-i})$.
	\end{cclaim}
	\begin{proof}
		We aim to find, for all $j < \alpha^*_g(v_{n-i})$, a $y \in \child_{Q_{i',\ell}}(v_{n-i})$ such that the descendants of $y$ in $Q_{i',\ell}$ form a \pba of height $j$.
		
		When the first possibility in the hypothesis occurs, the algorithm checks this and sets $Q_{i,j+1} \leftarrow Q_{i,j}$. Otherwise we can apply Lemma \ref{lem:pack} and hence the descendants of $w_{i,j}$ in $\widehat{Q}_{i,j+1}$ form a \pba of height $j$. So choosing $y = w_{i,j}$ guarantees we have such a $y$ in $Q_{i,j+1}$.
		
		Since $v_{n-i}$ either has a parent already in $Q_{i,j}$ or has no parent still in $Q_{i',\ell}$ it is not in $\descendant_{Q_{i,j'+1}}^{Q_{i,j'}}(w_{i,j'})$ for any $j \leq j' \leq \ell$ so its descendants are unchanged by \textsc{Pack} up to $Q_{i',\ell}$. Any other arcs are added below vertices stronger than $v_{n-i}$ so only affect the descendants of $v_{n-i}$ if they are below a feedback vertex, say $z$. In this case the new descendants are chosen to be of height less than $\alpha_g^*(z)$ so they have no effect on whether a \pba is formed below $y$ or its height. So the descendants of $y$ in $Q_{i', \ell}$ also form a \pba of height $j$. In $Q_{0,0}$, $v_{n-i}$ had no children of height greater than or equal to $\alpha_g^*(v_{n-i})$ since all such children would be demand children and this would contradict Definition \ref{def:alphastarwithg}. No arcs adjacent to $v_{n-i}$ are added before $Q_{i,0}$ so, in $Q_{i', \ell}$, $v_{n-i}$ has children whose descendants form a \pba of each height up to $\alpha^*_g(v_{n-i})$ and no others: this is exactly the conditions for Corollary \ref{cor:altpba}. 
	\end{proof}
	
	We can now extend the hypothesis to $i=i'$. We first show that $Q_{i',j}$ is a valid rooted forest and a subgraph of $T$ for all $j < \alpha^*_g(v_{n-i'})$. Initially $Q_{i',0} = Q_{i'-1,\alpha^*_g(v_{n-i'+1})}$ so is a valid rooted forest and a subgraph of $T$ by assumption. We proceed by induction on $j$. Either $Q_{i',j+1} = Q_{i',j}$ (and we are done) or it is just $Q_{i',j}$ with the additional arcs added at steps \ref{step:pack} and \ref{step:edge}. Since \ref{step:pack} is just a call to \textsc{Pack}, $\widehat{Q}_{i', j+1}$ is a valid rooted forest and a subgraph of $T$ by Lemma \ref{lem:pack} assuming that $(Q_{i',j}, P_{i',j}, j)$ satisfies the premises of Lemma \ref{lem:pack}. Also by Lemma \ref{lem:pack} $w_{i',j+1}$ has no parent in $\widehat{Q}_{i', j+1}$ and is in $P_{i', j+1}$ and hence weaker than $v_{n-i'}$ and not a feedback vertex. So $Q_{i', j+1}$ is also a valid rooted forest and a subgraph of $T$.
	
	It remains to show that, when the first possibility of the hypothesis does not occur, the conditions of Lemma \ref{lem:pack} on $P_{i',j}$ are satisfied. The way $P_{i',j}$ is chosen and that we have not rejected at Step \ref{step:reject} show all these conditions except that, for each $z \in P_{i',j}$, $Q_{i',j+1}[\descendant_{Q{i',j+1}}(z)]$ is a \pba. $z$ clearly has no parent in $Q_{i',j}$ and for each $z$ there is some $i < i'$ with $z = v_{n-i}$ since $z$ is weaker than $v_{n-i'}$ so this final condition is shown by Claim \ref{cl:childrenheights}.
	
	Therefore, by induction on $i'$ we have the hypothesis for all $i < n$ and hence also Claim \ref{cl:childrenheights} for all $i < n$.
	It remains to prove that the descendants of $v_{n-i}$ in $Q_{n,0}$ form a \ba and not just a \pba. Suppose for a contradiction this was not the case, then there must be some $i$ such that, for every $u \in \descendant_{Q_{n,0}}(v_{n-i})$ the descendants of $u$ do form a \ba but the descendants of $v_{n-i}$ itself do not. In particular the descendants of every child of $v_{n-i}$ form a \ba and hence the only way the descendants of $v_{n-i}$ do not form a \ba is for $v_{n-i}$ to be a feedback vertex and ``missing'' a child, that is there exists a $j < \alpha^*_g(v_{n-i})$ such that there is no child of $v_{n-i}$ whose descendants form a \ba of height $j$. However, in the proof of Claim \ref{cl:childrenheights} we find exactly such a child which is a contradiction. This extends Claim \ref{cl:childrenheights} to {\ba}s for $Q_{n,0}$, proving the Lemma.
\end{proof}

Hence $(Q_{n,0}, P^*, \log(n))$ satisfies the premises of Lemma \ref{lem:pack} and therefore, by Corollary \ref{cor:packwithba}, the descendants of $v^*$ in $Q^*$ form a \ba of height $\log(n)$ which is an \sba. So when the algorithm outputs a graph it is a valid \sba.

It remains to show the converse. Suppose for a contradiction that the algorithm rejects for every guess of $g$ but we have a positive instance: that is there exists a valid \sba. By Lemma \ref{lem:alwaysweaklycompact} there exists a valid, weakly compact \sba $H \subset T$. 

\begin{lemma} \label{lem:converseexchange}
	Given a valid, weakly-compact \sba $H$, for all $i, j$ there exists a valid \sba $H_{i,j}$ that is compact with respect to $\alpha_H$ restricted to $V(F)$, such that $Q_{i,j} \subset H_{i,j}$.
\end{lemma}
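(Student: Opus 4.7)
The plan is induction on the iteration index $(i,j)$ of the algorithm in the order they arise, maintaining as the invariant that $H_{i,j}$ is a valid \sba that is compact with respect to $g := \alpha_H|_{V(F)}$ and satisfies $Q_{i,j} \subseteq H_{i,j}$. For the base case $(0,0)$ set $H_{0,0} := H$: since $Q_{0,0} = (V(T), \mathcal{S})$, validity of $H$ gives $\mathcal{S} \subseteq A(H)$, and weak compactness of $H$ is exactly compactness with respect to $g$. Every ``idle'' transition --- when Step \ref{step:if} fires, so $Q_{i,j+1}=Q_{i,j}$, or when the outer loop advances with $Q_{i+1,0}=Q_{i,\alpha^*_g(v_{n-i})}$ --- simply inherits the previous \sba.

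The only non-trivial transition is when the algorithm actually invokes \textsc{Pack}. Here $Q_{i,j+1}$ differs from $Q_{i,j}$ by the arcs of a \pba of height $j$ rooted at $w_{i,j}$, together with the new arc $(v_{n-i},w_{i,j})$. Because the inner loop is entered only when $j < \alpha^*_g(v_{n-i})$, and because the preprocessing step has guaranteed that every feedback vertex except the root lies in $\head(\mathcal{S})$, we have either $v_{n-i} \in \head(\mathcal{S})$ or $v_{n-i}$ is the (feedback) root; in both cases the compactness invariant forces $\alpha_{H_{i,j}}(v_{n-i}) = \alpha^*_g(v_{n-i}) > j$. Proposition \ref{prop:altba} then supplies a child $y$ of $v_{n-i}$ in $H_{i,j}$ whose subtree is a \ba of height exactly $j$. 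The task reduces to ``rewiring'' $H_{i,j}$ so that this child becomes $w_{i,j}$ and its subtree coincides with the \pba that \textsc{Pack} has produced.

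The main obstacle is executing this rewiring cleanly, and the plan is to do so by a finite sequence of applications of Lemma \ref{lem:almostAcyclicExchange}. Each vertex $p \in P_{i,j}$ is, by definition, an orphan in $Q_{i,j}$ that is weaker than $v_{n-i}$; by the inductive hypothesis $p$ sits beneath some ancestor in $H_{i,j}$ on top of a \pba of size exactly $\beta_{Q_{i,j}}(p)$. Because \textsc{Pack} greedily consumes the elements of $P_{i,j}$ of largest $\beta$ (Condition 5 of Lemma \ref{lem:pack}), its choices can be mirrored inside $H_{i,j}$: whenever $H_{i,j}$'s subtree below $y$ uses an orphan that \textsc{Pack} did not pick, in place of one that \textsc{Pack} did, we swap the two via Lemma \ref{lem:almostAcyclicExchange}. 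The delicate step will be verifying the premises of that lemma at each swap: the moved vertices must lie outside $V(F)$ (so that $g$ is not disturbed), $v_{n-i}$ must have no demand out-neighbor among the relocated set (any such child would already appear in $Q_{i,j}$ and hence be excluded from $P_{i,j}$), and the sibling-height distinctness conditions from Definition \ref{def:alphastarwithg} must continue to hold. Since each exchange only alters heights of vertices outside $V(F) \cup \head(\mathcal{S})$, the function $g$ and the compactness witness are preserved, and after finitely many swaps we obtain the required $H_{i,j+1}$, containing $Q_{i,j+1}$ and satisfying the invariant.
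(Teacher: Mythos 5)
Your induction scaffolding (base case, idle transitions, and the observation that $\alpha_{H_{i,j}}(v_{n-i})>j$ guarantees a child of height exactly $j$) matches the paper, but the heart of the step --- the rewiring --- has a genuine gap: Lemma~\ref{lem:almostAcyclicExchange} is the wrong tool and cannot be made to apply. That lemma requires the higher sibling being demoted to be a \emph{demand child} of the common parent ($(u,v)\in\mathcal{S}$) and only swaps the heights of two \emph{siblings}; but rewiring is needed precisely in the case where the height-$j$ child $u$ of $v_{n-i}$ satisfies $(v_{n-i},u)\notin\mathcal{S}$ (otherwise Step~\ref{step:if} fires and there is nothing to do), and the orphan $y\in R$ that \textsc{Pack} selected but $H_{i,j}$ placed elsewhere is in general not a sibling of anything relevant --- it may sit arbitrarily deep in a different part of $H_{i,j}$, possibly below vertices of $\head(\mathcal{S})$ whose subtrees are pinned by $Q_{i,j}$. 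Moreover the containment $Q_{i,j+1}\subset H_{i,j+1}$ requires the specific new arcs (the arcs \textsc{Pack} created among $P_{i,j}$ and the arc $(v_{n-i},w_{i,j})$) to literally appear in $H_{i,j+1}$; sibling-height swaps do not relocate vertices across the tree or install those arcs, so "mirroring Pack's choices by exchanges" does not produce the required supergraph.

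The paper's proof uses a different and more involved mechanism that your proposal does not supply: pick the missing orphan $y\in R\setminus A_\ell$ of maximum $\beta$, walk up from $y$ to the last ancestor $w\in\head(\mathcal{S})\cup\{\troot(J_\ell)\}$ and its child $x$ on that path, prove that $w$ is at least as strong as $v_{n-i}$ (this uses Claim~\ref{cl:childrenheights} and is exactly what legitimizes the new arc $(w,b^*)$ in $T$), cut both subtrees, choose a compensating set $Z\subseteq A_\ell\setminus R$ with $\sum_{z\in Z}\beta_{Q_{i,j}}(z)=\beta_{Q_{i,j}}(y)$ (whose existence rests on conservation of $\beta$ and on property~5 of Lemma~\ref{lem:pack}, the "largest heights first" rule), reassemble both sides with two further calls to \textsc{Pack}, and argue termination because $|A_\ell\cap R|$ strictly increases; compactness is preserved because every vertex of $\head(\mathcal{S})$ already has its parent in $Q_{i,j}$, so its height never changes. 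None of these steps (strength of $w$, existence of $Z$, re-packing, termination) appears in your outline, and they are where the difficulty of the lemma lives. A small additional inaccuracy: $v_{n-i}$ need not lie in $\head(\mathcal{S})$ nor be the root --- it may merely have demand out-neighbours --- so compactness does not give equality of $\alpha_{H_{i,j}}(v_{n-i})$ with $\alpha^*_g(v_{n-i})$; what you actually have, and all you need, is the inequality $\alpha_{H_{i,j}}(v_{n-i})\geq\alpha^*_g(v_{n-i})>j$ from Observation~\ref{obs:atleastalphastar}.
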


\begin{figure*}[t]
	\begin{center}
		\includegraphics[scale=0.73]{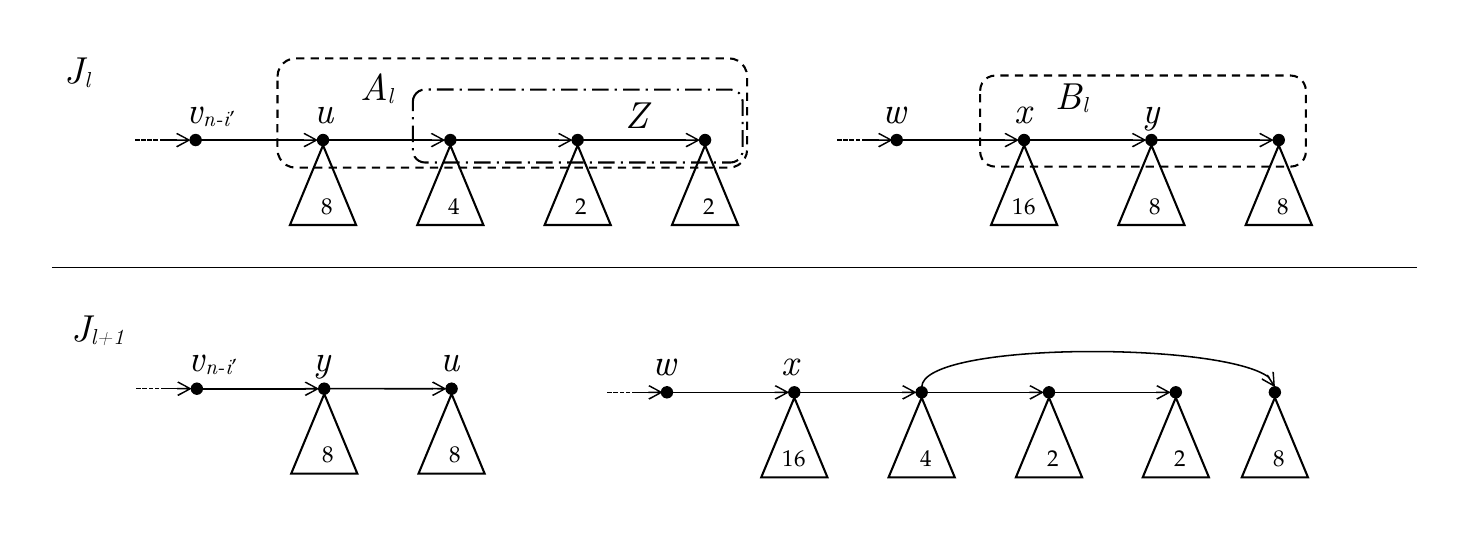}
		\caption{An example of the operation of Lemma \ref{lem:converseexchange}. The numbers represent $\beta_K$. Note that in this case $y$ beats $u$ so has taken the place of $u$ as the child of $v_{n-i}$ of size 16. Both $v_{n-i}$ and $w$ have many other children not pictured here. Note also that it may be the case that $v_{n-i} = w$.}
	\end{center}
\end{figure*}

\begin{proof}
	Let $g$ be $\alpha_H$ restricted to $V(F)$. Clearly $Q_{0,0} \subset H$ since $H$ is a valid {\sba} and so, contains every demand arc. Also $Q_{0,j}=Q_{0,0}$ for each  $0 \leq j < \alpha^*(v_{n-i})$ (see the proof of Lemma \ref{lem:correctdescendants}) so let $H_{0,j} = H$.
	Now assume that $Q_{i,j} \subset H_{i,j}$ for $0 \leq i < i'$ and $0 \leq j \leq \alpha^*(v_{n-i})$. We want to extend this to $i=i'$. Initially, $Q_{i', 0} = Q_{i'-1,\alpha^*_g(v_{n-i'+1})}$ so let $H_{i',0} = H_{i'-1,\alpha^*_g(v_{n-i'+1})}$.  Now we can assume that $Q_{i',j} \subset H_{i',j}$ and aim to prove that $Q_{i',j+1} \subset H_{i',j+1}$. Additionally if $v_{n-i'}$ is not a demand vertex then $\alpha^*_g(v_{n-i})=0$ so we are already done. Let $u$ be the unique vertex of height $j$ from $\child_{H_{i',j}}(v_{n-i'})$. If $(v_{n-i'}, u) \in \mathcal{S}$ then $Q_{i',j+1} =Q_{i',j}$ so we can set $H_{i',j+1}=H_{i',j}$ and we are done. Otherwise, let $R = \descendant_{Q_{i',j+1}}^{Q_{i',j}}(w_{i',j})$ (these are the vertices given parents by $\textsc{Pack}$), initialise $J_0=H_{i',j}$, and repeat the following for $\ell \geq 0$:
	\begin{enumerate}
		\item Let $u$ be the unique vertex of height $j$ from $\child_{J_\ell}(v_{n-i'})$ and $A_\ell = \descendant_{J_\ell}^{Q_{i',j}}(u) \setminus \fdesc_{J_\ell}(u)$.
		\item If $A_\ell = R$ then set $H_{i',j+1} = J_\ell$ and we are done.
		\item Otherwise, pick a vertex $y \in R \setminus A_\ell$ maximizing $\beta_{Q_{i',j}}(y)$. If $y = \troot(J_\ell)$ let $x = y$. Otherwise, let $w$ be the last vertex on the path from $\troot(J_\ell)$ to $y$ in $J_\ell$ that is in $\head(\mathcal{S}) \cup \{\troot(J_\ell)\}$.
		Let $x$ be the vertex after $w$ on this path.
		\item Let $B_\ell = \descendant_{J_\ell}^{Q_{i',j}}(x) \setminus \fdesc_{J_\ell}(x)$. Let $K$ be the rooted forest obtained from $J_\ell$ by deleting each in-edge to a vertex in $A_\ell \cup B_\ell$ (this includes $(v_{n-i'}, u)$ and $(w, x)$ if $y$ is not the root of $J_\ell$).
		\item Find a $Z \subset A_\ell \setminus R$ such that $\sum_{z \in Z}{\beta_{Q_{i',j}}(z)} = \beta_{Q_{i',j}}(y)$.
		\item Let $(K', a^*) \leftarrow \textsc{Pack}(K, (A_\ell \setminus Z) \cup \{y\}, \alpha_{J_\ell}(u))$\\ and  $(K'', b^*) \leftarrow \textsc{Pack}(K', (B_\ell \setminus \{y\}) \cup Z, \alpha_{J_\ell}(x))$.
		\item Finally, let $J_{\ell+1} = K'' \cup \{(v_{n-i'},a^*)\}$ $(\cup \{(w,b^*)\} \text{ if } y \neq \troot(J_\ell))$.
	\end{enumerate}
	Some explanation: $A_\ell$ are the vertices that $J_\ell$ has used to make the subtree of height $j$ below $v_{n-i'}$. We can think of these as the ``decisions'' that $J_\ell$ has made. If they agree with the decisions that $Q_{i',j+1}$ has made ($R$) then we are done. If not we find a vertex that $Q_{i',j+1}$ chose but $J_\ell$ did not ($y$) and look at where it is in $J_\ell$. $B_\ell$ is the decisions that $J_\ell$ made around $y$. By swapping the set that $y$ is in (and preserve sum of betas using $Z$) we can calculate a new $J_{\ell+1}$ where $y$ is in $A_{\ell+1}$.
	
	We now justify that this process does indeed produce a valid \sba $H_{i',j+1}$ as required.
	Clearly every vertex in $A_\ell$ is weaker than $v_{n-i'}$ and, since $y \in R \subseteq P_{i',j}$, $y$ is weaker than $v_{n-i'}$ so $(v_{n-i'}, a^*)$ is in $T$.
	
	If $y \neq \troot(J_\ell)$ then we need to show that $(w, b^*)$ is in $T$. Since $w$ is the \emph{last} vertex from $\head(\mathcal{S})$ and $x$ is after it, $x \notin \head(\mathcal{S})$, that is $(w, x) \notin \mathcal{S}$. Suppose for a contradiction that $w$ is strictly weaker than $v_{n-i'}$. The conditions on $w$ ensure Claim \ref{cl:childrenheights} applies: the descendants of $w$ in $Q_{i',j}$ form a \pba of height $\alpha^*_g(w)$. Therefore, since $Q_{i',j} \subset J_\ell$, $\descendant_{J_\ell}(w) \setminus \fdesc_{J_\ell}(w) = \descendant_{Q_{i',j}}(w) \setminus \fdesc_{Q_{i',j}}(w)$. But $y \in \descendant_{J_\ell}(w) \setminus \fdesc_{J_\ell}(w)$ whereas $y \notin \descendant_{Q_{i',j}}(w) \setminus \fdesc_{Q_{i',j}}(w)$ since $y$ has no parent in $Q_{i',j}$. 
	So $w$ is at least as strong as $v_{n-i'}$ (they may be the same vertex).
	Every vertex in $A_\ell$ (and hence $Z$) is weaker than $u$ since all feedback vertices have parents and any stronger vertices which are descendants of $u$ are excluded by $\fdesc_{J_\ell}(u)$. Since $u$ is weaker than $v_{n-i'}$ and additionally not a feedback vertex, the edge $(w, b^*)$ is in $T$.
	
	We need to check that we only make valid calls to \textsc{Pack}. 
	By our choice of $Z$ we have \begin{equation}
		\sum_{a \in (A_\ell \setminus Z) \cup \{y\}}{\beta_{K}(a)} = \sum_{a \in A_\ell}{\beta_{K}(a)} = \alpha_{J_\ell}(u) \label{eq:betaeq}
	\end{equation}
	and the same for $x$.
	Both sides of the last equality are simply counting the vertices that are descendants of $u$ in $H$: the left side uses $g$ to count descendants of feedback vertices and explicitly counts the others. By the assumption of compactness with respect to $g$ in the inductive hypothesis this agrees with the right hand side.
	The descendants of $u$ and $x$ in $J_\ell$ form a \pba since initially $J_\ell = H_{i',j}$ which is an \sba and the only changes are the result of \textsc{Pack}.
	
	It remains to show that such a $Z$ exists for each $\ell$.
	We have $A_\ell \setminus R \subset P_{i',j} \setminus R$ so $\beta_{Q_{i',j}}(z) \leq \beta_{Q_{i',j}}(y)$ for all $z \in A_\ell \setminus R$ by Lemma \ref{lem:pack}. Finally the equality 
	$$\sum_{z \in A_\ell}{\beta_{Q_{i',j}}(z)} = \sum_{z \in R}{\beta_{Q_{i',j}}(z)} = 2^j$$
	is conserved for all $\ell$ by the choice of $Z$. Hence
	$$\sum_{z \in A_\ell \setminus R}{\beta_{Q_{i',j}}(z)} = \sum_{z \in R \setminus A_\ell}{\beta_{Q_{i',j}}(z)} \geq \beta_{Q_{i',j}}(y)$$
	because $y$ is one of the elements of the second sum. Therefore there exists a $Z$ for each $\ell$.
	
	Since $Z \subset A_\ell \setminus R$ and $y \in R \setminus A_\ell$, $|A_{\ell} \cap R|$ increases by one each iteration, so after at most $|R|$ iterations the loop will terminate. At this point $A_\ell = R$, that is,
	the descendants of the child of $v_{n-i'}$ of height $j$ (including the child itself) are the same in both $Q_{i,j+1}$ and $H_{j+1}$. 
	Also every vertex in $\head(\mathcal{S})$ has a parent in $Q_{i',j}$ so their descendants are unchanged by \textsc{Pack} and since $\alpha_{J_{\ell+1}}(a^*) = \alpha_{J_\ell}(u)$, no heights are changed there either. Hence their heights agree between $H_{i',j}$ and $H_{i',j+1}$ and, since $H_{i',j}$ is compact with respect to $g$, $H_{i',j+1}$ is compact with respect to $g$ too. So by induction on $j$ we have a valid {\sba} $H_{i',j}$ which is compact with respect to $g$ and a supergraph of $Q_{i',j}$ for each $j$. This completes the proof of the lemma.
\end{proof}

Consider the run where the algorithm guesses $g$ as $\alpha_H$ restricted to $V(F)$. First, suppose the algorithm rejects at Step \ref{step:reject}. $H_{i, j}$ is compact with respect to $g$ so $\alpha_{H_{i,j}}(v_{n-i}) \geq \alpha^*_g(v_{n-i})$, by Observation \ref{obs:atleastalphastar}. Hence there exists $x \in \child_{H_{i,j}}(v_{n-i})$ with $\alpha_{H_{i,j}}(x) = j < \alpha^*_g(v_{n-i})$ and $x$ is not a demand vertex since the algorithm never rejects in this case. Now $\sum_{y \in \descendant_{H_{i,j}}^{Q_{i,j}}(x)}{\beta_{Q_{i, j}}(y)} \geq 2^j$ by a similar argument as for (\ref{eq:betaeq}). This time the feedback descendants are included in the sum and hence are double counted leading to an inequality. All vertices $y \in \descendant_{H_{i,j}}^{Q_{i,j}}(x)$ are weaker than $x$ and hence $v_{n-i}$ since $x$ is not a feedback vertex. Also $2^j \geq \beta_{H_{i,j}}(y) \geq \beta_{Q_{i,j}}(y)$ since $Q_{i,j} \subset H_{i,j}$. Finally they are chosen to not have a parent in $Q_{i,j}$ so they must be in $P_{i,j}$. But then we have $$2^j \leq \sum_{y \in \descendant_{H_{i,j}}^{Q_{i,j}}(x)}{\beta_{Q_{i, j}}(y)} \leq \sum_{w \in P_{i,j}}{\beta_{Q_{i, j}}(w)}$$ which contradicts the rejection of the algorithm.

Finally, suppose the algorithm rejects at Step \ref{step:rejectend}. Then consider $H_{n,0} = H_{n-1,\alpha_g^*(v_1)} \supset Q_{n-1,\alpha_g^*(v_1)} = Q_{n,0}$. Since $H_{n,0}$ is an \sba, clearly $$\sum_{z \in P^*}{\beta_{Q_{n,0}}(z)} \geq 2^{\alpha_{H_{n,0}}(\troot(H_{n,0}))} = n$$ contradicting the rejection.
\end{proof}

In the following sections, we describe how our algorithm can be extended or modified to obtain the further algorithmic results outlined in the Introduction.

\subsection{FPT Algorithm for \textsc{Demand-TF} When Upsets Are Demanded}
Suppose that $F \subseteq \mathcal{S}$. 
A closer inspection of our algorithm shows that where we assumed that every feedback vertex except the root has a demand parent, a weaker assumption is sufficient, specifically, the following property.
\begin{property} \label{prop:noinfeedback}
	For all $v \notin \head(\mathcal{S})$ there is no $(u, v) \in F$.
\end{property}
That is, every vertex without a demand parent has no in-feedback arc. Equivalently $\head(F) \subseteq \head(\cS)$. Clearly this is implied by $F \subseteq \cS$.
This means that any vertex that is stronger than $v$, and only these vertices, can be used as its parent. Furthermore, even if $v$ is a feedback vertex the descendants of $v$ are weaker than its ancestors, at least until further feedback vertices. Effectively $v$ does not act as a feedback vertex (if there is a feedback edge $(v, w)$ we will handle this when talking about $w$) so we replace $V(F)$ with $\head(F)$ throughout the algorithm: these are the remaining feedback vertices that still behave as such. Whenever the analysis refers to a vertex not being a feedback vertex because it has no demand parent we now simply use Property \ref{prop:noinfeedback}: although it may be a feedback vertex, it behaves as a non-feedback vertex since it has no incoming feedback edges.

The only changes required to our algorithm are therefore the removal of the guess of the parents of feedback vertices (eliminating the overhead of $n^{\bigoh(k)}$) and a slightly smaller set of feedback vertices used in the domain of $g$ and the definitions of $\alpha^*$ and $\beta$. Since the guess of the parents of the feedback vertices has been removed the runtime is dominated by the guess of $g$: there are $(\log n)^{\bigoh(k)} = (k \log k)^{\bigoh(k)}n^{\bigoh(1)}$ possibilities for $g$ so the overall run time is FPT in $k$.

\subsection{XP Algorithm for \textsc{Demand-TF} With Specified Rounds for Demands}
The first key observation is that the round of a match is exactly the height of that the losing player of the match takes in the solution (numbering rounds from zero). That is, if a demand match $(u, v)$ occurs at round $i$ in the tournament represented by an \sba $H$ then $\alpha_H(v) = i$. So if every demand match has a specified round this is equivalent to specifying the heights of every vertex in $\head(\cS)$. Theorem \ref{thm:demandtfxp} guesses the heights of vertices in $V(F)$ and calculates the heights of all other vertices in $\head(\cS)$. So by extending the domain of $g$ to $\head(\cS) \cup V(F)$ and ensuring it agrees with the required heights the algorithm will ensure that the \sba it outputs will satisfy our additional constraints. Note that when we extend the domain of $g$ in this way, the guess is still made only for the vertices in $V(F)$ as the value of $g$ for the vertices in $\head(\Scal)$ is part of the input. Thus, the running time of the new algorithm remains the same as that of Theorem~\ref{thm:demandtfxp}. 

\section{Future Work}
\begin{enumerate}
    \item Theorem~\ref{thm:demandtfxp} shows the viability of a systematic study of the parameterized complexity of {\demandTF} with respect to other parameters studied for TF, such as feedback {\em vertex} set.  
	\item We also propose a relaxation of our model where, if, there is no seeding that enables every demand match to take place, then the goal is to compute a seeding that makes the {\em maximum} number of demand matches take place. How efficiently could one do this? Our work implies a $2^dn^{\bigoh(k)}$-time algorithm for this problem, where $d$ is the total number of demands and $k$ is the feedback arc set number -- simply guess the set of satisfied demands and invoke Theorem~\ref{thm:demandtfxp}. A natural follow-up question is whether one can remove the exponential dependence on $d$ and have an XP algorithm parameterized by $k$ alone?
	Resolution of this question would require additional ideas to this paper. For instance, we can no longer assume that every vertex has at most one demand in-neighbor. Efficient approximation algorithms for this problem are also an interesting research direction. In terms of exact-exponential-time algorithms, it is easy to see that the algorithm of Theorem \ref{thm:exactExponentialTimeAlgorithm} already extends naturally to this variant. 
	\item Could one extend our results for {\demandTF} to the probabilistic model? Here, the tournament designer has two natural objectives -- maximize the probability that all demand matches are played or maximize the expected number of satisfied demands.
	\item The edge-constrained version of {\sc Subgraph Isomorphism} we define in this paper is a  problem of independent interest. Techniques such as color coding~\cite{AlonYZ95} could give FPT algorithms for this problem parameterized by the size of some simple pattern graphs. However, the behavior of this problem with respect to various structural parameterizations of graphs is less clear and we leave this as a research direction of broad interest to the algorithms community. 
\end{enumerate}

\section{Acknowledgements}
Sushmita Gupta acknowledges support from SERB's MATRICS Grant (MTR/2021/000869) and SUPRA Grant (SPR/2021/000860).
Ramanujan Sridharan acknowledges support by the Engineering and Physical Sciences Research Council (grant numbers EP/V007793/1 and EP/V044621/1). 

\clearpage
\bibliography{aaai24}

\begin{thebibliography}{21}
\providecommand{\natexlab}[1]{#1}

\bibitem[{Alon, Yuster, and Zwick(1995)}]{AlonYZ95}
Alon, N.; Yuster, R.; and Zwick, U. 1995.
\newblock Color-Coding.
\newblock \emph{J. {ACM}}, 42(4): 844--856.

\bibitem[{Aziz et~al.(2014)Aziz, Gaspers, Mackenzie, Mattei, Stursberg, and
  Walsh}]{AzizGMMSW14}
Aziz, H.; Gaspers, S.; Mackenzie, S.; Mattei, N.; Stursberg, P.; and Walsh, T.
  2014.
\newblock Fixing a Balanced Knockout Tournament.
\newblock In \emph{Proceedings of the Twenty-Eighth {AAAI} Conference on
  Artificial Intelligence, July 27 -31, 2014, Qu{\'{e}}bec City, Qu{\'{e}}bec,
  Canada.}, 552--558. {AAAI} Press.

\bibitem[{Connolly and Rendleman(2011)}]{CR11}
Connolly; and Rendleman. 2011.
\newblock Tournament qualification, seeding and selection efficiency.
\newblock \emph{Technical Report 2011-96, Tuck School of Business}.

\bibitem[{Cygan et~al.(2015)Cygan, Fomin, Kowalik, Lokshtanov, Marx, Pilipczuk,
  Pilipczuk, and Saurabh}]{CyganFKLMPPS15}
Cygan, M.; Fomin, F.~V.; Kowalik, L.; Lokshtanov, D.; Marx, D.; Pilipczuk, M.;
  Pilipczuk, M.; and Saurabh, S. 2015.
\newblock \emph{Parameterized Algorithms}.
\newblock Springer.
\newblock ISBN 978-3-319-21274-6.

\bibitem[{Gupta et~al.(2018)Gupta, Roy, Saurabh, and Zehavi}]{GuptaR0Z18a}
Gupta, S.; Roy, S.; Saurabh, S.; and Zehavi, M. 2018.
\newblock Winning a Tournament by Any Means Necessary.
\newblock In \emph{Proceedings of the Twenty-Seventh International Joint
  Conference on Artificial Intelligence, {IJCAI} 2018, July 13-19, 2018,
  Stockholm, Sweden.}, 282--288.

\bibitem[{Gupta et~al.(2019)Gupta, Saurabh, Sridharan, and Zehavi}]{Gupta0SZ19}
Gupta, S.; Saurabh, S.; Sridharan, R.; and Zehavi, M. 2019.
\newblock On Succinct Encodings for the Tournament Fixing Problem.
\newblock In Kraus, S., ed., \emph{Proceedings of the Twenty-Eighth
  International Joint Conference on Artificial Intelligence, {IJCAI} 2019,
  Macao, China, August 10-16, 2019}, 322--328. ijcai.org.

\bibitem[{Horen and Riezman(1985)}]{DBLP:journals/ior/HorenR85}
Horen; and Riezman. 1985.
\newblock Comparing Draws for Single Elimination Tournaments.
\newblock \emph{Operations Research}, 33(2): 249--262.

\bibitem[{Impagliazzo and Paturi(2001)}]{ImpagliazzoP01}
Impagliazzo, R.; and Paturi, R. 2001.
\newblock On the Complexity of k-{SAT}.
\newblock \emph{J. Comput. Syst. Sci.}, 62(2): 367--375.

\bibitem[{Kim and Vassilevska~Williams(2015)}]{KimW15}
Kim, M.~P.; and Vassilevska~Williams, V. 2015.
\newblock Fixing Tournaments for Kings, Chokers, and More.
\newblock In \emph{Proceedings of the Twenty-Fourth International Joint
  Conference on Artificial Intelligence, {IJCAI} 2015, Buenos Aires, Argentina,
  July 25-31, 2015}, 561--567. {AAAI} Press.

\bibitem[{Lang et~al.(2007)Lang, Pini, Rossi, Venable, and Walsh}]{Lang07}
Lang, J.; Pini, M.~S.; Rossi, F.; Venable, K.~B.; and Walsh, T. 2007.
\newblock Winner Determination in Sequential Majority Voting.
\newblock In Veloso, M.~M., ed., \emph{{IJCAI} 2007, Proceedings of the 20th
  International Joint Conference on Artificial Intelligence, Hyderabad, India,
  January 6-12, 2007}, 1372--1377.

\bibitem[{Laslier(1997)}]{Laslier97}
Laslier. 1997.
\newblock Tournament Solutions and Majority Voting.
\newblock \emph{Springer-Verlag}.

\bibitem[{Manurangsi and Suksompong(2023)}]{ManurangsiSuksompong}
Manurangsi, P.; and Suksompong, W. 2023.
\newblock Fixing knockout tournaments with seeds.
\newblock \emph{Discret. Appl. Math.}, 339: 21--35.

\bibitem[{Ramanujan and Szeider(2017)}]{RamanujanS17}
Ramanujan, M.~S.; and Szeider, S. 2017.
\newblock Rigging Nearly Acyclic Tournaments Is Fixed-Parameter Tractable.
\newblock In Singh, S.; and Markovitch, S., eds., \emph{Proceedings of the
  Thirty-First {AAAI} Conference on Artificial Intelligence, February 4-9,
  2017, San Francisco, California, {USA}}, 3929--3935. {AAAI} Press.

\bibitem[{Rosen(1986)}]{Rosen86}
Rosen. 1986.
\newblock Prizes and incentives in elimination tournaments.
\newblock \emph{The American Economic Review}, 76(4): 701--715.

\bibitem[{Russell and van Beek(2011)}]{RussellB11}
Russell, T.; and van Beek, P. 2011.
\newblock An Empirical Study of Seeding Manipulations and Their Prevention.
\newblock In Walsh, T., ed., \emph{{IJCAI} 2011, Proceedings of the 22nd
  International Joint Conference on Artificial Intelligence, Barcelona,
  Catalonia, Spain, July 16-22, 2011}, 350--356. {IJCAI/AAAI}.

\bibitem[{Stanton and {Vassilevska
  Williams}(2011)}]{DBLP:conf/ijcai/StantonW11}
Stanton, I.; and {Vassilevska Williams}, V. 2011.
\newblock Rigging Tournament Brackets for Weaker Players.
\newblock In Walsh, T., ed., \emph{{IJCAI} 2011, Proceedings of the 22nd
  International Joint Conference on Artificial Intelligence, Barcelona,
  Catalonia, Spain, July 16-22, 2011}, 357--364. {IJCAI/AAAI}.

\bibitem[{Tullock(1980)}]{Tullock80}
Tullock. 1980.
\newblock Toward a Theory of the Rent-seeking Society.
\newblock \emph{Texas A\&M University Press}.

\bibitem[{Vassilevska~Williams(2010)}]{Williams10}
Vassilevska~Williams, V. 2010.
\newblock Fixing a Tournament.
\newblock In \emph{Proceedings of the Twenty-Fourth {AAAI} Conference on
  Artificial Intelligence, {AAAI} 2010, Atlanta, Georgia, USA, July 11-15,
  2010}. {AAAI} Press.

\bibitem[{{Vassilevska Williams}(2016)}]{DBLP:reference/choice/Williams16}
{Vassilevska Williams}, V. 2016.
\newblock Knockout Tournaments.
\newblock In \emph{Handbook of Computational Social Choice}, 453--474.

\bibitem[{Vu, Altman, and Shoham(2009)}]{VuAS2009}
Vu, T.; Altman, A.; and Shoham, Y. 2009.
\newblock On the complexity of schedule control problems for knockout
  tournaments.
\newblock In Sierra, C.; Castelfranchi, C.; Decker, K.~S.; and Sichman, J.~S.,
  eds., \emph{8th International Joint Conference on Autonomous Agents and
  Multiagent Systems {(AAMAS} 2009), Budapest, Hungary, May 10-15, 2009, Volume
  1}, 225--232. {IFAAMAS}.

\bibitem[{Zehavi(2023)}]{Zehavi23}
Zehavi, M. 2023.
\newblock Tournament Fixing Parameterized by Feedback Vertex Set Number Is
  {FPT}.
\newblock In Williams, B.; Chen, Y.; and Neville, J., eds.,
  \emph{Thirty-Seventh {AAAI} Conference on Artificial Intelligence, {AAAI}
  2023, Thirty-Fifth Conference on Innovative Applications of Artificial
  Intelligence, {IAAI} 2023, Thirteenth Symposium on Educational Advances in
  Artificial Intelligence, {EAAI} 2023, Washington, DC, USA, February 7-14,
  2023}, 5876--5883. {AAAI} Press.

\end{thebibliography}

\end{document}